\documentclass[11pt]{article}
\usepackage[utf8]{inputenc}
\usepackage{evan}
\usepackage{indentfirst}
\usepackage{color}
\usepackage{authblk}
\usepackage{textcomp}

\usepackage{algorithm, algpseudocode}

\newcommand{\E}{\operatorname{\mathbb{E}}}
\newcommand{\Lap}{\operatorname{Lap}}
\newcommand{\C}{\mathbb{C}}
\newcommand{\R}{\mathbb{R}}
\newcommand{\N}{\mathbb{N}}
\newcommand{\V}{\operatorname{\mathbb{V}}}



\usepackage{lineno}

\begin{document}

\title{Debiasing Functions of Private Statistics in Postprocessing}

\author[1]{Flavio Calmon}
\author[1]{Elbert Du}
\author[1]{Cynthia Dwork}
\author[2]{Brian Finley}
\author[3]{Grigory Franguridi}

\affil[1]{Harvard University\thanks{This work was supported in part by Simons Foundation Grant 733782 and Cooperative Agreement CB20ADR0160001 with the United States Census Bureau. This 
material is also based upon work supported by the National Science Foundation under Grant No. CIF-2312667 and CIF-2231707.}}
\affil[2]{U.S. Census Bureau\thanks{Works (articles, reports, speeches, software, etc.) created by U.S. Government employees are not subject to copyright in the United States, pursuant to 17 U.S.C. \S 105.  International copyright, 2024, U.S. Department of Commerce, U.S. Government. Any opinions and conclusions expressed herein are those of the authors and do not reflect the views of the U.S. Census Bureau.}
}
\affil[3]{USC Dornsife Center for Economic and Social Research\thanks{\textcopyright{} Flavio Calmon, Elbert Du, Cynthia Dwork, Grigory Franguridi -- Creative Commons (CC BY-SA 4.0)}
}

\maketitle

\begin{abstract}

Given a differentially private unbiased estimate $\tilde{q}=q(D) +\nu$ of a statistic $q(D)$, we wish to obtain unbiased estimates of \textit{functions} of $q(D)$, such as $1/q(D)$, solely through post-processing of $\tilde{q}$, with no further access to the confidential dataset~$D$.  To this end, we adapt the deconvolution method used for unbiased estimation in the statistical literature, deriving unbiased estimators for a broad family of twice-differentiable functions when the privacy-preserving noise $\nu$ is drawn from the Laplace distribution (Dwork \textit{et al.}, 2006).
    We further extend this technique to a more general class of functions, deriving approximately optimal estimators that are unbiased for values in a user-specified interval (possibly extending to $\pm \infty$).

    We use these results to derive an unbiased estimator for private means when the size $n$ of the dataset is not publicly known. In a numerical application, 
    we find that a mechanism that uses our estimator to return an unbiased sample size and mean outperforms a mechanism that instead uses the previously known unbiased privacy mechanism for such means (Kamath \textit{et al.}, 2023).
    We also apply our estimators to develop unbiased transformation mechanisms for per-record differential privacy, a privacy concept in which the privacy guarantee is a public function of a record's value (Seeman \textit{et al.}, 2024).  Our mechanisms provide stronger privacy guarantees than those in prior work (Finley \textit{et al.}, 2024) by using Laplace, rather than Gaussian, noise.

    Finally, using a different approach, we go beyond Laplace noise by deriving unbiased estimators for polynomials under the weak condition that the noise distribution has sufficiently many moments.
    
\end{abstract}
\pagebreak
\tableofcontents
\pagebreak
\section{Introduction}

Differential privacy (DP) has become widely accepted as a ``gold standard'' of privacy protection in statistical analysis. In particular, it has been adopted by many companies such as Google, Meta, and Apple to protect customer data and by the U.S. Census Bureau to protect respondent data in the 2020 Census \cite{abowd20222020censusdisclosureavoidance}. DP mechanisms work by adding randomness to all statistics published from a protected database. This added noise guarantees that no attacker can confidently determine from the published statistics whether a particular record is included in the dataset, thereby preserving its privacy. 
    
Among DP mechanisms, \textit{additive mechanisms} are canonical and widely used. These simply add data-independent, zero-mean random noise to the statistics. For example, the Laplace mechanism adds Laplace-distributed noise and is one of the first and most fundamental DP mechanisms \cite{Dwork_McSherry_Nissim_Smith_2017}. The scale of the added noise needs to be proportional to the statistics' global sensitivity -- the greatest amount by which the statistic could change upon the addition or deletion of a single record. Intuitively, this ensures that there is enough noise to mask the presence or absence of any particular record.

Often, however, the statistics to which noise is added differ from the final statistics of interest. In these cases, the statistics of interest must be estimated using the available noisy statistics. Suppose that the noisy statistic $\tilde{q}$ is formed by applying an additive mechanism to the univariate statistic $q$, but that we want to learn $f(q)$, not $q$. Even though $\tilde{q}$ is unbiased for $q$, the plug-in estimator $f(\tilde{q})$ is not generally unbiased for $f(q)$. When unbiasedness is desired, other estimators must be used.

To address this problem, we first derive unbiased estimators for the Laplace mechanism, for a general class of twice-differentiable functions $f$ -- those which are also \textit{tempered distributions} (Section~\ref{sec:unbiasLaplace}). This builds on \cite{hillebrand2023unbiased}, which develops recursive algorithms that are unbiased estimators for polynomials in Laplace variables. Our paper provides estimators for a large class of non-polynomial functions and gives a simple, closed-form estimator for polynomials.  We also provide methods to adapt functions that are not tempered distributions in a way that permits unbiased estimation over a subset of $q$'s domain (Section~\ref{sec:unbiasLaplaceNonTempDist}). This extension lets us provide unbiased estimators for the case when $f(q) = 1/q$, which, in turn, lets us provide unbiased estimators of ratio statistics. Such cases arise frequently in practice, as discussed below. Finally, we derive unbiased estimators for a very general class of additive mechanisms when $f$ is a polynomial (see Section~\ref{sec:unbiasPoly}).

There are several reasons why noise may not be added directly to the statistic of interest and bias in the plug-in statistic must be considered. A leading case occurs when $f(q)$ has a much higher global sensitivity than $q$. For example, when the domain of $q$ includes 0 or values arbitrarily close to 0, the global sensitivity of $f(q) = 1/q$ is typically infinite and no amount of noise provides a finite DP guarantee. This same problem affects the many statistics which can be expressed as ratios of low-sensitivity statistics. For example, the mean is the ratio of a sum and a sample size. Likewise, in a simple linear regression of the regressand $y$ on the regressor $x$, the ordinary least squares (OLS) estimator of the slope coefficient is the ratio of the empirical (co)variances $\text{Cov}[x,y]$ and $\text{Var}[x]$. It is common, then, to add noise to the low-sensitivity statistics that form these ratios and use the plug-in estimator for the ratio statistic of interest. See, for example, \cite{alabi2020differentially} for such a treatment of the OLS estimator.

Noise may also be added to statistics that are not of direct interest because data curators, such as government agencies, may publish noisy microdata or a noisy predetermined set of aggregates for general-purpose use. For example, a researcher trying to learn the proportion of the population with doctoral degrees may only have access to published noisy totals of the general population and the population of degree holders. The plug-in estimator of the mean is, as above, the ratio of these noisy totals.
    
This situation may also arise because, under DP, there is a limited ``privacy budget" which is drawn upon every time we use the raw data to release another (noisy) statistic. Splitting the budget among more statistics requires that more noise be added to each of them. This makes it beneficial to ``re-use'' statistics whenever possible. For example, in Section~\ref{sec:Numerical_Means}, we develop a DP mechanism that uses our results to provide private unbiased estimates of a mean and sample size. This mechanism obtains a noisy sample size via the Laplace mechanism and then re-uses it to estimate the denominator of the mean statistic. We find that this approach outperforms an alternative mechanism that uses an existing method from \cite{KamathEtAl2023Trilemma} to add noise directly to the mean query, without re-using the noisy sample size.

Again, these scenarios all have in common that the plug-in estimator $f(\tilde{q})$ will typically be biased.\footnote{In fact, in the case with the Laplace mechanism and $f(q)=1/q$, the expectation and all higher moments of the plug-in estimator $f(\tilde{q})$ fail to even exist, implying that the estimator has extremely fat tails and is very prone to returning extreme outliers. This affects the ratio statistics discussed above, as well. The unbiased estimator we develop for this case in Section~\ref{subsec:UnbiasEstForInverse} possesses finite moments of all orders.} To see why unbiasedness is desirable, recall that the bias and variance of a sum of $n$ uncorrelated estimates are respectively the sums of the estimates' biases and variances. Accordingly, the sum's bias increases at the rate $O(n)$ while its standard deviation grows at the rate $O(\sqrt{n})$. The sum's overall RMSE therefore grows at the $O(\sqrt{n})$ rate if the estimates are unbiased, but at the faster $O(n)$ rate otherwise.

For example, consider the following simple example: suppose the true value of some quantity of interest is $1$, but each time we try to learn the value of this quantity, we get a fresh draw from the distribution $\mathcal{N}(1,100)$. A mechanism that ignores the data and returns $0$ has bias $1$ and variance $0$, resulting in an overall RMSE of $1$. On the other hand, reporting the value of any single draw would have bias $0$ and variance $100$. On average, this will be off by around $10$. Thus, on individual draws, the first mechanism is more accurate. However, if we take the mean of $10,000$ such draws, the mechanism that always returned $0$ still gives a mean of $0$, which is still off by $1$. On the other hand, the mean of the $10,000$ draws now has variance $\frac{100}{10,000} = \frac{1}{100}$, resulting in an RMSE of $\frac{1}{10}$. That is, we would now expect this estimate to be off by around $0.1$, which is a significant improvement over the biased estimator.

This makes unbiasedness very important in meta-analyses, which aggregate multiple estimates. It is also important when adding noise to a large number of quantities whose sums are of independent interest. This situation commonly arises in the local model of DP, where an extra layer of privacy is obtained by adding noise to every record even before entrusting it to the data curator. Sums or means using these noisy records could then be subject to severe error if the record-level estimates being summed are biased. For example, with network data, the count of $k$-stars (i.e., sets of $k$ edges sharing a node) is a sum of polynomials in each node's degree. In experiments with network data protected by local DP, \cite{hillebrand2023unbiased} find that a mechanism that simply sums unbiased estimates of these polynomials outperforms the L2 error of prior work by factors as high as 5 orders of magnitude.

Likewise, unbiasedness is key when noise is added to relatively disaggregate sums with the expectation that they can be aggregated further to obtain sums for larger groups. For example, \cite{seemanEtAl2024} and \cite{finleyEtAl2024} develop mechanisms for use with per-record DP -- a variant of DP whose privacy guarantees differ between records, and which is being considered by the Census Bureau for use with its County Business Patterns (CBP) data product \cite{BeckomEtAl2023}. \cite{finleyEtAl2024} develop \textit{transformation mechanisms} for this purpose, which improve privacy guarantees by adding noise to concave functions of $q$ rather than to $q$ itself. Estimates of $q$ must then be obtained from these noisy transformed values. The CBP data includes sums of employment and payroll, grouped by finely divided geographies and industry codes. If these transformation mechanisms were used for the CBP and the estimator of $q$ for these sums were biased, further aggregates of these estimates to obtain, say, state-level sums would be subject to severe biases.

In Section~\ref{sec:PRDP}, we apply our estimators to create variants of these transformation mechanisms that satisfy a stronger type of per-record DP guarantee than the ones originally proposed in \cite{finleyEtAl2024}.

In this paper, we make the following contributions:
\begin{enumerate}
    \item We derive closed-form unbiased estimators for a large class of functions -- twice-differentiable functions that are tempered distributions -- when the Laplace mechanism is used. We also develop estimators that are unbiased for subsets of the statistic's domain for functions that are not in this class.
    \item We exposit the deconvolution method from the statistics literature (e.g., \cite{VoinovNikulin1993}, page 185) for deriving unbiased estimators. This could be used to derive estimators for further functions and further mechanisms, and we believe its use in DP is novel and of independent interest.
    \item We apply our unbiased estimators to create novel unbiased privacy mechanisms for per-record DP, a new variant of DP being considered for use by the Census Bureau \cite{BeckomEtAl2023}.
    \item We derive closed-form unbiased estimators for polynomial functions of statistics privatized using any of a large class of additive mechanisms.
\end{enumerate}

\section{Differential Privacy, Unbiasedness, and Deconvolution}
\label{sec:prelim}
The following is the definition of differential privacy, introduced in \cite{Dwork_McSherry_Nissim_Smith_2017}:
\begin{definition}
Datasets $D$ and $D'$ are \emph{neighboring databases} if they differ by the inclusion of at most $1$ element.
\end{definition}

\begin{definition} \label{def:DP}
A mechanism $\mathcal{M}$ is $(\epsilon,\delta)$-differentially private ($(\epsilon,\delta)$-DP) if, for any pair of neighboring datasets $D,D'$ and any measurable set of possible outcomes $S$, we have

$$\Pr[\mathcal{M}(D) \in S] < e^\epsilon \cdot \Pr[\mathcal{M}(D') \in S] + \delta.$$
\end{definition}

Most of our work uses Fourier transforms \cite{Fourier_2009}. The following definitions and theorems are adapted from the textbook treatment in \cite{vanDijk+2013}.

\begin{definition} \label{def:FourierAbsInt}
The Fourier transform of an absolutely integrable function $f$ is
    $$F[f(x)](y) = \int_{-\infty}^\infty e^{-2\pi i yx} f(x) dx.$$

We often denote the Fourier transform of $f$ by $\hat f(y)$.
    The Fourier transform also has an inverse:
    $$F^{-1}[\hat{f}(y)](x) = \int_{-\infty}^\infty e^{2\pi i yx} \hat{f}(y) dy.$$
\end{definition}

There are many important functions we wish to compute unbiased estimates of which are not absolutely integrable. In particular, polynomials and the function $f(q) = 1/q$ are not absolutely integrable, so we must define the Fourier transform over a more general family of functions, \textit{tempered distributions}.\footnote{Technically, $1/q$ is not a tempered distribution either, but only because it is poorly behaved at 0. This will be addressed in Section~\ref{sec:unbiasLaplaceNonTempDist}.} Importantly, in this use, the term ``distribution'' does not refer to a probability distribution. Rather, it refers to a class of objects also known as ``generalized functions.'' For the purposes of this work, we can largely restrict ourselves to working with tempered distributions which are also functions, though there exist tempered distributions which are not functions, such as the Dirac delta ``function''. Below, we specialize the relevant theory to the case of tempered distributions that are also functions, but the interested reader should see Appendix~\ref{app:Prelim} for the more general case.

For our purposes, then, tempered distributions can be thought of as functions that may not be absolutely integrable, but which grow no faster than a polynomial. Formally, this is expressed by the condition that the product of a tempered distribution and any function in the Schwartz space (defined below) is integrable.

\begin{definition} \label{def:SchwartzSpace}
    The Schwartz space $S(\RR)$ is defined as follows:
    $$S(\RR) = \{s:\RR \rightarrow \C \mid s \in C^\infty, \,\, \sup_{x \in \RR} |x^m s^{(n)}(x)| < \infty \quad \forall m,n \in \N \},$$
    where $\N$ denotes the set of non-negative integers and $s^{(n)}$ denotes the $n^\text{th}$ derivative of $s$.
    That is, functions in $S(\RR)$ are infinitely differentiable everywhere and they -- along with all of their derivatives -- go to 0 at a super-polynomial rate.
\end{definition}

Note that all the functions $s \in S(\RR)$ are absolutely integrable, so their Fourier transforms $\hat s$ exist. With the Schwartz space so defined, we introduce tempered distributions below.

\begin{definition} \label{def:tmprDist}
    A function $f$ is a tempered distribution if and only if, for all $s \in S(\R)$,
    $$\int_{-\infty}^\infty f(x) s(x)dx \in \C.$$
\end{definition}

Definition~\ref{def:FourierAbsInt} introduces the Fourier transform only for absolutely integrable functions. The following definition extends it to all tempered distributions.\footnote{To see that Definition~\ref{def:FourierAbsInt} implies Definition~\ref{def:DistFourier} for absolutely integrable functions, note that the $e^{-2\pi ixy}$ term does not depend on the function $f$, so swapping the order of integration with Fubini's theorem immediately gives us the equality in Definition~\ref{def:DistFourier}.}

\begin{definition} \label{def:DistFourier}
    When it exists, $\hat{f}$ is the function such that for all $s \in S(\R)$,
    $$\int_{-\infty}^\infty \hat{f}(x) s(x)dx = \int_{-\infty}^\infty f(x) \hat{s}(x)dx.$$
\end{definition}

Technically, the Fourier transform of a tempered distribution always exists and is a tempered distribution, but may not also be a function, even when the distribution being Fourier-transformed is a function. See Appendix~\ref{app:Prelim} for details.

The deconvolution method we use to derive unbiased estimators in Section~\ref{sec:unbiasLaplace} is applicable because, as explained below, the requirement that an estimator be unbiased can be expressed in terms of a convolution.

\begin{definition}
    The convolution of functions $f$ and $g$ is
    $$(f*g)(x) = \int_{-\infty}^\infty f(z) g(x-z) dz.$$
\end{definition}

Critically, the Fourier transform of a convolution is the product of the convolved functions' Fourier transforms.

\begin{theorem}
\label{thm:convolution}
    (\cite{vanDijk+2013} section 7.1 property c)
    $$\widehat{(f*g)}(y) = \hat{f}(y) \cdot \hat{g}(y).$$
\end{theorem}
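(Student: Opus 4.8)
The statement is the classical convolution theorem, and the plan is to prove it first for absolutely integrable $f,g$ — the setting of Definition~\ref{def:FourierAbsInt} — and then to lift it to tempered distributions by duality. For the $L^1$ case I would begin from the definition of the Fourier transform and substitute in the definition of the convolution:
\[
  \widehat{(f*g)}(y) = \int_{-\infty}^\infty e^{-2\pi i y x}\left(\int_{-\infty}^\infty f(z)\,g(x-z)\,dz\right)dx .
\]
The crucial step is to interchange the order of integration, which is licensed by Fubini's theorem once we check absolute integrability of the integrand: by Tonelli's theorem and the translation-invariance of Lebesgue measure,
\[
  \int_{-\infty}^\infty\!\!\int_{-\infty}^\infty |f(z)|\,|g(x-z)|\,dz\,dx = \|f\|_{1}\,\|g\|_{1} < \infty .
\]
After swapping, for each fixed $z$ I would substitute $u = x-z$ and use $e^{-2\pi i y x} = e^{-2\pi i y z}e^{-2\pi i y u}$ to separate the variables, obtaining
\[
  \widehat{(f*g)}(y) = \int_{-\infty}^\infty f(z)\,e^{-2\pi i y z}\left(\int_{-\infty}^\infty e^{-2\pi i y u}\,g(u)\,du\right)dz = \hat f(y)\,\hat g(y),
\]
which is the claim.

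To pass to tempered distributions, the plan is to invoke Definition~\ref{def:DistFourier}: pair $\widehat{(f*g)}$ with an arbitrary $s\in S(\R)$, move the Fourier transform onto $s$, and reduce to the computation above performed on whichever of the two factors is regular. This is exactly the situation in which the theorem is later applied, since one of the two functions is always a noise density (e.g., the Laplace density), which is integrable and decays exponentially, so $f*g$ is well-defined and is again a tempered distribution; the remaining factor is then tested against the Schwartz function $\hat s$.

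The hard part is not the algebra but the analytic care. In the $L^1$ case everything reduces to verifying the Fubini hypothesis, which is routine. In the distributional case one must be careful that $f*g$ is a well-defined tempered distribution at all — this fails for arbitrary pairs of tempered distributions and requires a regularity or support hypothesis on one factor — and that each manipulation respects the pairing against Schwartz functions. Because our applications always supply such a regular factor, these hypotheses hold automatically, and for the fully general statement we simply cite \cite{vanDijk+2013}.
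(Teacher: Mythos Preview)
Your proposal is correct and follows essentially the same approach as the paper: write out the double integral, invoke Fubini, and use the substitution $u=x-z$ to factor the exponential. You are in fact more careful than the paper, which asserts Fubini without checking the absolute-integrability hypothesis and does not discuss the tempered-distribution extension at all; your Tonelli computation and your remarks on the regularity needed for the distributional case are welcome additions.
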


We will also need the following theorem to derive unbiased estimators for the case of Laplace noise.


\begin{theorem}
    \label{thm:tmpDistFourierDeriv}
    (\cite{vanDijk+2013} section 7.8 Example 5)
    For any tempered distribution $f$, the Fourier transform of its $k^{th}$ derivative $f^{(k)}$ is $\widehat{f^{(k)}} = (2\pi i y)^k \hat{f}$.
\end{theorem}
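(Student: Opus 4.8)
The plan is to prove the identity by duality against Schwartz functions, reducing to the case $k=1$ and then iterating. By Definition~\ref{def:DistFourier}, it suffices to show that for every $s \in S(\R)$,
\[
\int_{-\infty}^\infty f'(x)\, \hat{s}(x)\, dx \;=\; \int_{-\infty}^\infty (2\pi i x)\, \hat{f}(x)\, s(x)\, dx,
\]
since the full claim then follows by applying the $k=1$ case $k$ times.

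First I would unpack the left-hand side using the defining property of the distributional derivative: for any $\phi \in S(\R)$ one has $\int f'(x)\phi(x)\,dx = -\int f(x)\phi'(x)\,dx$, obtained by integration by parts with the boundary terms killed by the rapid decay of $\phi$. Taking $\phi = \hat{s}$ — which lies in $S(\R)$ because the Fourier transform maps the Schwartz space to itself — gives $\int f'(x)\hat{s}(x)\,dx = -\int f(x)(\hat{s})'(x)\,dx$. The next step is the companion identity for honest functions: differentiating under the integral sign in Definition~\ref{def:FourierAbsInt} yields $(\hat{s})'(y) = \int_{-\infty}^\infty (-2\pi i x)e^{-2\pi i yx}s(x)\,dx = \hat{t}(y)$, where $t(x) := -2\pi i x\, s(x)$ is again a Schwartz function. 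Hence $-\int f(x)(\hat{s})'(x)\,dx = -\int f(x)\,\hat{t}(x)\,dx$, and now applying Definition~\ref{def:DistFourier} to $\hat{f}$ turns this into $-\int \hat{f}(x)\, t(x)\,dx = \int (2\pi i x)\,\hat{f}(x)\, s(x)\,dx$, which is exactly the right-hand side. Iterating, $\widehat{f^{(k)}} = (2\pi i y)\,\widehat{f^{(k-1)}} = \cdots = (2\pi i y)^k \hat{f}$.

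The main obstacle is not any single computation but keeping every intermediate object legitimate: one must check that $\hat{s}$ and $t$ genuinely lie in $S(\R)$ (so the duality definitions of $\hat{f}$ and of $f'$ may be invoked on them), that differentiation under the integral sign is justified — again a consequence of Schwartz decay, which provides an integrable dominating function for both the integrand and its $y$-derivative — and that the boundary terms in the integration by parts vanish. These are standard facts about the Schwartz space and can be cited from \cite{vanDijk+2013}; once they are in hand, the argument is just the short chain of substitutions above. (Note also that when $f$ is itself a differentiable function, $f'$ in the statement is its ordinary derivative, so no separate interpretation is needed in that case.)
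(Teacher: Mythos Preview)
Your proof is correct and is essentially the same duality argument the paper gives in Appendix~\ref{app:DistTheory} (Corollary~\ref{cor:derivative}): pair $\widehat{f'}$ with a Schwartz function, move the hat across by Definition~\ref{def:DistFourier}, integrate by parts to shift the derivative onto $\hat{s}$, then use the Schwartz-level identity $(\hat{s})' = \widehat{-2\pi i x\, s(x)}$ and move the hat back. Your version is a bit more explicit about why each intermediate object stays in $S(\R)$, but the skeleton of the argument is identical.
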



With the query $q$ and its privacy-preserving noisy estimate $\tilde{q}$, we say that an estimator $g$
is unbiased for $f(q)$ if \begin{align} \label{eq:unbias}
f(q) = \E[g(\tilde{q}) | q].
\end{align}
By conditioning on the true query value, $q$, we treat the database as fixed. Our estimators, then, are unbiased with respect to the randomness in the noise being added for privacy. Throughout the rest of this paper, all expectations are conditional on $q$ unless otherwise noted and we suppress the extra conditioning notation so that $\E[.] \equiv \E[.|q]$.

Let the noise added for privacy be independent of the database and denote its PDF by $r$. The deconvolution method, as seen, for example, on page 185 of \cite{VoinovNikulin1993}, starts by noting that if $g$ is unbiased for $f(q)$, then Equation~\ref{eq:unbias} can be reexpressed in terms of a convolution:
\begin{align} \label{eq:unbiasConv}
    f(q) = \E[g(\tilde{q})] = \int_{-\infty}^\infty g(\tilde{q}) r(\tilde{q} - q) d\tilde{q} = (g*r)(q).
\end{align}
With the unbiasedness equation in this form, Theorem~\ref{thm:convolution} lets us Fourier-transform both sides to turn the convolution on the right-hand side into a simple multiplication. Finally, we simply solve for the Fourier transform of $g$ in terms of the Fourier transforms of $f$ and $r$ and inverse-Fourier-transform the result. Formally,
\begin{align}
    f(q) = (g*r)(q) \iff \hat{f}(y) = \hat{g}(y) \hat{r}(y) \iff \hat{g}(y) = \frac{\hat{f}(y)}{\hat{r}(y)} \iff g(x) = F^{-1}\left[\frac{\hat{f}(y)}{\hat{r}(y)}\right](x),
\end{align}
assuming the existence of all the involved Fourier and inverse Fourier transforms.

\section{Unbiased Estimation with Laplace Noise}
\label{sec:unbiasLaplace}

A standard mechanism for differential privacy perturbs the query with Laplace noise scaled to the global sensitivity of a query, which is the maximum difference between the query values on neighboring databases. That is, to achieve $(\epsilon,0)$-DP when releasing the value of a query $q$ with global sensitivity $\Delta$, we can simply release $q + \Lap(0,\frac{\Delta}{\epsilon})$ \cite{Dwork_McSherry_Nissim_Smith_2017}.

Our primary contribution is deriving unbiased estimators for functions of $q$ when we only have access to the value of $q + \Lap(0,b)$, for some noise scale parameter $b$. These estimators are unique (up to their values on a set of measure zero).

\begin{theorem}
\label{thm:unbiased}
    \item Let $\tilde{q} \sim q + \Lap(0,b)$. 
    \begin{itemize}
        \item For any twice-differentiable function $f:\R \to \R$ that is a tempered distribution, $f(\tilde{q}) - b^2 f''(\tilde{q})$ is an unbiased estimator for $f(q)$.
        \item For any function $f:\R \to \R$, if two estimators $g_1(\tilde{q})$ and $g_2(\tilde{q})$ are unbiased for $f(q)$, then $g_1$ and $g_2$ are equal almost everywhere.
    \end{itemize}
\end{theorem}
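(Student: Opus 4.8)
The plan is to verify the two claims separately, using the deconvolution identity~\eqref{eq:unbiasConv} together with the explicit form of the Laplace density and its Fourier transform. Recall that the $\Lap(0,b)$ density is $r(z) = \frac{1}{2b}e^{-|z|/b}$, whose Fourier transform is $\hat r(y) = \frac{1}{1+(2\pi b y)^2}$; equivalently, $r$ satisfies the distributional ODE $r - b^2 r'' = \delta$, since $\delta$ has Fourier transform identically $1$ and $\widehat{r - b^2 r''}(y) = (1 + (2\pi b y)^2)\hat r(y) = 1$ by Theorem~\ref{thm:tmpDistFourierDeriv}.

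\textbf{Unbiasedness.} For the first bullet, I would compute $\E[f(\tilde q) - b^2 f''(\tilde q)]$ directly. Writing this expectation as an integral against $r$, it equals $((f - b^2 f'')*r)(q)$ by~\eqref{eq:unbiasConv}. Taking Fourier transforms and applying Theorems~\ref{thm:convolution} and~\ref{thm:tmpDistFourierDeriv},
\begin{align}
\widehat{(f - b^2 f'')*r}(y) = \bigl(\hat f(y) - b^2 (2\pi i y)^2 \hat f(y)\bigr)\hat r(y) = \hat f(y)\bigl(1 + (2\pi b y)^2\bigr)\cdot \frac{1}{1+(2\pi b y)^2} = \hat f(y).
\end{align}
Inverting the Fourier transform gives $(f - b^2 f'')*r = f$, i.e.\ $\E[f(\tilde q) - b^2 f''(\tilde q)\mid q] = f(q)$, which is exactly the unbiasedness condition~\eqref{eq:unbias}. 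One subtlety to address carefully: $f$ being a tempered distribution guarantees $\hat f$ exists as a tempered distribution and that these Fourier manipulations are legitimate, but I should also note that twice-differentiability of $f$ ensures $f''$ is the classical derivative (not merely distributional), so $f(\tilde q) - b^2 f''(\tilde q)$ is a genuine pointwise-defined estimator; a brief remark that the convolution with the Schwartz-class-like Laplace density is well-defined for tempered-distribution $f$ completes this direction. Alternatively, for readers wanting to avoid distribution theory, one can give a self-contained proof by integration by parts twice on $\int g(\tilde q) r(\tilde q - q)\,d\tilde q$ with $g = f - b^2 f''$, using $r - b^2 r'' = \delta$ away from the kink at $\tilde q = q$ and tracking the boundary term contributed by the jump in $r'$ at $0$ — this is where the ``$-b^2 f''$'' correction exactly cancels.

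\textbf{Uniqueness.} For the second bullet, suppose $g_1(\tilde q)$ and $g_2(\tilde q)$ are both unbiased for $f(q)$ for every $q$. Then $h := g_1 - g_2$ satisfies $(h*r)(q) = 0$ for all $q$. The hard part is concluding $h = 0$ almost everywhere: this is a statement that the Laplace density is not a divisor of zero under convolution. The cleanest route is Fourier-analytic: $\widehat{h*r} = \hat h\cdot \hat r = 0$, and since $\hat r(y) = 1/(1+(2\pi b y)^2)$ is nonzero for every real $y$, we get $\hat h \equiv 0$, hence $h = 0$ a.e.\ by Fourier inversion. The technical care needed is that $h$ need not a priori be integrable or a tempered distribution — unbiasedness only asserts the integral $\int h(\tilde q) r(\tilde q - q)\,d\tilde q$ converges and vanishes. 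I would handle this by noting that $h*r$ vanishing identically already forces enough regularity (the convolution smooths $h$), or more simply by observing that for each fixed $q$ the function $\tilde q \mapsto h(\tilde q)r(\tilde q - q)$ is integrable, so $h(\tilde q)e^{-|\tilde q - q|/b}$ is integrable for every $q$, which makes $h$ tempered (indeed exponentially-weighted-integrable); then the Fourier argument applies rigorously. I expect this regularity bookkeeping in the uniqueness part to be the main obstacle, though it is routine once set up correctly.
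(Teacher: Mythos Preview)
Your Part~1 (unbiasedness) is essentially the paper's argument: both compute $\hat r(y)=1/(1+(2\pi b y)^2)$, use Theorem~\ref{thm:convolution} to turn the unbiasedness convolution into a product, and invoke Theorem~\ref{thm:tmpDistFourierDeriv} to identify $(2\pi b y)^2\hat f$ with $-b^2\widehat{f''}$. You frame it as verifying a candidate estimator while the paper frames it as solving $\hat g=\hat f/\hat r$ for $g$, but the computation is the same.

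Your Part~2 (uniqueness) takes a different route from the paper, and the route you sketch has a real gap. The paper does not argue via Fourier transforms at all; it cites that the Laplace location family is \emph{complete} (from \cite{OosterhoffEtAl1987Complete}) and then runs the Lehmann--Scheff\'e step: $\E[g_1(\tilde q)-g_2(\tilde q)\mid q]=0$ for all $q$ forces $g_1-g_2=0$ a.e. Your Fourier heuristic (``$\hat h\,\hat r=0$ and $\hat r$ has no real zeros, hence $\hat h=0$'') is exactly what underlies completeness, but your justification that $h$ is tempered is incorrect. The implication ``$h(x)e^{-|x-q|/b}$ integrable for every $q$ $\Rightarrow$ $h$ tempered'' is false: take $h(x)=e^{|x|/(2b)}$, which satisfies $\int|h(x)|e^{-|x|/b}\,dx<\infty$ yet is not a tempered distribution (pair it against the Schwartz function $e^{-(1+x^2)^{1/4}}$ and the integral diverges). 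So $\hat h$ need not exist in $S'(\R)$ and the division step is unjustified as written. The fix is short --- either (i) replace the Fourier transform by the bilateral Laplace transform, which \emph{is} analytic in the strip $|\mathrm{Re}\,s|<1/b$ under exactly your integrability hypothesis, and argue there; or (ii) give the elementary direct argument: writing $0=e^{-q/b}\!\int_{-\infty}^{q}h(x)e^{x/b}dx+e^{q/b}\!\int_{q}^{\infty}h(x)e^{-x/b}dx$ for all $q$ and differentiating in $q$ forces each piece to vanish identically, hence $h=0$ a.e. The paper simply outsources this to the completeness literature.
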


\begin{proof}
    See Appendix~\ref{App: section3}.
\end{proof}

Some examples of unbiased estimators are given below.

\begin{example}
\label{ex:UnbiasedEstExamples_powerFunc}
\begin{enumerate}
    \item Any power function $f(q) = q^k$ has unbiased estimator $\tilde{q}^k - b^2 k(k-1) \tilde{q}^{k-2}$. In particular, for $f(q) = cq$ for any constant $c$, the unbiased estimator is also $c\tilde{q}$. Section 3.1 of \cite{hillebrand2023unbiased} derives this estimator in the form of a recursive algorithm. We contribute the closed form here to simplify computation and facilitate intuitive understanding.
    \item Within the set of twice differentiable functions $f$ that are tempered distributions, Theorem~\ref{thm:unbiased} allows us to determine which functions are unbiased estimators of themselves. When $f(\tilde{q})$ is unbiased for $f(q)$, we have $\E[f(\tilde{q})] = \E[f(\tilde{q}) - b^2 f''(\tilde{q})] \implies \E[f''(\tilde{q})] = 0$. By the second part of Theorem ~\ref{thm:unbiased} and the unbiasedness of the zero function for zero, this implies $f''(x) = 0$ almost everywhere, so $f(x)$ must be linear. The naive plug-in estimator, then, is biased for any nonlinear function in this class. This highlights the usefulness of Theorem~\ref{thm:unbiased}. 

    We can similarly characterize the $f$ whose unbiased estimators are simply linear transformations of the plug-in estimator -- that is, $f$ for which $\E[\alpha f(\tilde{q}) + \beta] = f(q)$ for some $\alpha, \beta \in \R$. By Theorem~\ref{thm:unbiased}, these functions satisfy $\E[\alpha f(\tilde{q}) + \beta] = \E[f(\tilde{q}) - b^2 f''(\tilde{q})]$, and so satisfy $\alpha f(x) + \beta = f(x) - b^2 f''(x)$ almost everywhere. When $\alpha \neq 1$, solutions to this differential equation take the form\footnote{The case where $\alpha = 1$ and $\beta = 0$ is dealt with above.}
    $$f(x) = c_1 e^{\frac{\sqrt{1-\alpha}}{b} x}+c_2 e^{-\frac{\sqrt{1-\alpha }}{b}x} + \frac{\beta}{1-\alpha}.$$

    Using Euler's formula, we can see that tempered distributions in this class include functions of the form $f(x) = c \cos(u x)$ and $f(x) = c\sin(u x)$. Nonetheless, this is still a rather restricted class of functions.

\end{enumerate}
\end{example}

\begin{remark} \label{rem:fNot2Diff}
When the function $f$ is not twice differentiable but is a tempered distribution, an analog of Theorem~\ref{thm:unbiased} holds. This relies on the use of an alternative notion of the derivative that applies to all tempered distributions -- the \textit{distributional derivative}. For background on this derivative concept, see Appendix~\ref{app:Prelim}.

In this case, we still have $\E[f(\tilde{q})] - \E[b^2 f''(\tilde{q})] = f(q)$, but the distributional derivative $f''$ is a tempered distribution which is not a function. This does not give us an unbiased estimator, but instead we can rearrange to obtain the bias, as a function of $q$, of the naive plug-in estimator $f(\tilde{q})$:

\begin{align} \label{eq:LapEstBias}
    \E[f(\tilde{q})] - f(q) = \E[b^2 f''(\tilde{q})] = \frac{b}{2} \int_{-\infty}^\infty f''(q + x)  e^{-|x|/b} dx.
    \end{align}

For example, let $f(x) = |x|$, $f'(x) = -1$ for $x \le 0$ and $1$ for $x > 0$ (the discontinuity at 0 is irrelevant since $\{0\}$ is a set of measure 0). Then $f''(x) = 2 \delta(x)$, where $\delta$ is the Dirac delta function (defined in Example~\ref{ex:dirac} in Appendix~\ref{app:Prelim}). The bias is simply $\frac{b}{2} \cdot 2 e^{-|q|/b} = b e^{-|q|/b}.$

Whether or not $f$ is twice differentiable, Equation~\ref{eq:LapEstBias} suggests the intuition that the plug-in estimator will have greater bias when $f$ has greater curvature near the true query value $q$.
\end{remark}

\section{Extension to Functions that are not Tempered Distributions}
\label{sec:unbiasLaplaceNonTempDist}

If the function $q \mapsto f(q)$ is not a tempered distribution, we can often bound the domain such that it is continuous and twice-differentiable in that domain. That is, suppose we know a priori that $q \ge L$ for some lower bound $L \in \RR$. This is often the case in differential privacy, as the global sensitivity of a sum query is finite only if the data's domain is bounded. Likewise, counts can often be lower bounded by $1$. Then, suppose we replace the function $f$ with some function
\begin{align}
    \label{eq:f-tilde-def} 
    \tilde{f}(q) = \left\{\begin{matrix} f(q) & q \ge L\\ h(q) & q < L\end{matrix}  \right.,
\end{align}
where $h(L) = f(L)$ and $h$ is twice differentiable with $h'(L) = f'(L)$ and $h''(L) = f''(L)$. The function $\tilde{f}$ is thus twice differentiable. Assuming that $h(q)$ and $f(q)$ and their derivatives grow no faster than a polynomial as, respectively, $q \to -\infty$ and $q \to \infty$, $\tilde{f}$ is a tempered distribution, as well. We can then apply Theorem~\ref{thm:unbiased} to get an unbiased estimator of $\tilde f$, i.e.
\begin{align}
    \mathbb{E}\left[\tilde{f}(\tilde{q}) - b^2 \tilde{f}''(\tilde{q})\right] = \tilde{f}(q).
\end{align}
With the assumption that $q \ge L$, we have $\tilde{f}(q) = f(q)$, making this estimator unbiased for $f(q)$, as well.

\begin{example}
\label{ex:taylor}
    For $f(q) = \frac{1}{q}$ and $L = 1$, we need to find some function $h$ such that $h(1) = 1, h'(1) = -1, h''(1) = 2$. An example of such a function is $h(q) = 1 - (q-1) + (q-1)^2.$ We can generically use polynomials for $h$ whenever $f$ grows at most polynomially as $q \rightarrow \infty$ and is twice differentiable for $q \geq L$.
\end{example}

We now focus on optimizing this method over polynomial extensions for a particular function of interest: $f(q) = 1/q$.

\subsection{Unbiased Estimation for $f(q) = 1/q$}
\label{subsec:UnbiasEstForInverse}

We have shown that it is possible to construct a function that permits unbiased estimation as long as it is twice differentiable on some domain that the true query value is known to be in, and, if this domain is unbounded, as long as the function does not grow too quickly.
In this section, we show how to optimally choose the function $h$ in the above construction.

We restrict ourselves to polynomial functions $h$ for two reasons.
First, the solution among polynomials of fixed degree is efficiently computable. Second, when the optimal function $h$ is a twice continuously differentiable tempered distribution, polynomials can approximate this function arbitrarily well, in the sense that the expected squared error of the polynomial-based estimator can be made arbitrarily close to optimal. This follows from Theorem~\ref{thm:polyApprox}.

\begin{theorem}[Polynomial approximation]
    \label{thm:polyApprox}
    Let $L \in \R$ and let $f:[L,\infty) \to \R$ be twice differentiable and a tempered distribution. Let $\mu$ be a probability measure such that the integrals $\int_L^\infty f(q) e^{(L-q)/b} d\mu(q)$ and $\int_L^\infty e^{(L-q)/b} d\mu(q)$ exist and are finite. With $w:(-\infty,L] \to \R$, let $\tilde{f}[w]$ denote the function
    \begin{align}
    \tilde{f}[w](q) = \left\{\begin{matrix} f(q) & q \ge L\\ w(q) & q < L\end{matrix}  \right.
    \end{align}
    
    Let $h:(-\infty,L] \to \R$ be an arbitrary twice continuously differentiable tempered distribution that satisfies $h(L) = f(L)$, $h'(L) = f'(L)$ and $h''(L) = f''(L)$. Denote the estimator $g \equiv \tilde{f}[h] - b^2 \tilde{f}[h]''$ and denote its expected squared error by 

    \begin{align}
        \alpha \equiv \int_{-\infty}^\infty \int_L^\infty \left(g(x) - f(q)\right)^2 \frac{1}{2b} e^{-|x-q|/b} d\mu(q) dx.
    \end{align}

    There exists a sequence of polynomials $(p_K)_{K=1}^\infty$ over $(\infty,L]$ that satisfy $p_K(L) = f(L)$, $p_K'(L) = f'(L)$ and $p_K''(L) = f''(L)$ such that the sequence of associated estimators $g_K \equiv \tilde{f}[p_K] - b^2 \tilde{f}[p_K]''$ satisfies

    \begin{align}
        \lim_{K \to \infty} \int_{-\infty}^\infty \int_L^\infty \left(g_K(x) - f(q)\right)^2 \frac{1}{2b} e^{-|x-q|/b} d\mu(q) dx = \alpha.
    \end{align}
\end{theorem}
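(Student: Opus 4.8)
The plan is to recast the expected squared error as a continuous functional of the ``free part'' $G_h:=h-b^2h''$ of the estimator, reduce the claim to an $L^2$-approximation problem on $(-\infty,L]$, and solve that by an explicit double-antiderivative construction; the only nonroutine ingredient is a density statement for polynomials.

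\emph{Step 1 (reduction to an $L^2$ problem).} The estimator $g$ equals $f-b^2f''$ on $[L,\infty)$, independently of $h$, and equals $h-b^2h''$ on $(-\infty,L)$; likewise $g_K$ equals $p_K-b^2p_K''$ there. Splitting the outer integral in $\alpha$ at $L$ and using $|x-q|=q-x$ when $x<L\le q$ gives
\[
\alpha=\alpha_0+\tfrac{1}{2b}\int_{-\infty}^{L} e^{x/b}\,\Psi(G_h(x))\,dx,\qquad \Psi(v):=\int_L^\infty (v-f(q))^2 e^{-q/b}\,d\mu(q),
\]
with $\alpha_0$ independent of $h$. Writing $\Psi(v)=Av^2-2Bv+C$ where $A=\int_L^\infty e^{-q/b}d\mu$ and $B=\int_L^\infty f\,e^{-q/b}d\mu$ are finite by hypothesis and $C=\int_L^\infty f^2e^{-q/b}d\mu\in[0,\infty]$, the functional $G\mapsto\tfrac1{2b}\int_{-\infty}^L e^{x/b}\Psi(G(x))\,dx$ equals $\tfrac{A}{2b}\|G\|_{L^2(\omega)}^2-\tfrac{B}{b}\langle G,\mathbf 1\rangle_{L^2(\omega)}+\mathrm{const}$ on the finite measure $\omega:=e^{x/b}dx$ over $(-\infty,L]$, hence is continuous in the $L^2(\omega)$ norm. (If $C=\infty$ then $\alpha_0=\infty$ and $\alpha_K=\infty$ for every polynomial extension, so there is nothing to prove; assume $C<\infty$.) The identical formula computes $\alpha_K$ with $G_{p_K}:=p_K-b^2p_K''$ replacing $G_h$, so it suffices to produce polynomials $p_K$ on $(-\infty,L]$ with $p_K(L)=f(L)$, $p_K'(L)=f'(L)$, $p_K''(L)=f''(L)$ such that $G_{p_K}\to G_h$ in $L^2(\omega)$; those boundary conditions are precisely what makes $g_K$ a legitimate estimator, as in Section~\ref{sec:unbiasLaplaceNonTempDist}.

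\emph{Step 2 (the construction).} I would let $R_K:=h''(L)+S_K$ for polynomials $S_K$ with $S_K(L)=0$, and define $p_K(x):=f(L)+f'(L)(x-L)+\int_L^x\!\int_L^s R_K(u)\,du\,ds$. Then $p_K$ is a polynomial, its boundary conditions hold automatically (the value and first derivative are the constants of integration, since $f=h,f'=h'$ at $L$, and $p_K''(L)=h''(L)+S_K(L)=f''(L)$), and Taylor's theorem with integral remainder gives $p_K(x)-h(x)=\int_L^x\!\int_L^s(R_K-h'')(u)\,du\,ds$. So the whole problem collapses to approximating $v_0:=h''-h''(L)$ --- continuous, polynomially bounded, and vanishing at $L$ --- by polynomials $S_K$ vanishing at $L$, in a norm strong enough that ``integrate twice from $L$'', an unbounded operator on $L^2(\omega)$, still yields $L^2(\omega)$-convergence. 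The trick is to fix any $b'>b$ and approximate $v_0$ in $L^2(\omega')$, $\omega':=e^{x/b'}dx$ over $(-\infty,L]$: an iterated Cauchy--Schwarz estimate gives $|p_K(x)-h(x)|\lesssim\|R_K-h''\|_{L^2(\omega')}\,e^{-x/(2b')}$ for $x<L$, whence $\|p_K-h\|_{L^2(\omega)}^2\lesssim\|R_K-h''\|_{L^2(\omega')}^2\int_{-\infty}^L e^{x(1/b-1/b')}dx\to0$ (the integral converges because $b'>b$), while $\|p_K''-h''\|_{L^2(\omega)}=\|R_K-h''\|_{L^2(\omega)}\le e^{L(1/b-1/b')/2}\|R_K-h''\|_{L^2(\omega')}\to0$. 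Together these give $G_{p_K}\to G_h$ in $L^2(\omega)$, and Step 1 then yields $\alpha_K\to\alpha$.

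\emph{Step 3 (the main obstacle).} What remains --- and the step I expect to need the most care --- is the density fact just used: \emph{polynomials vanishing at $L$ are dense in $L^2((-\infty,L],e^{x/b'}dx)$}. Under $x=L-t$ this becomes density of $\operatorname{span}\{t,t^2,t^3,\dots\}$ in $L^2([0,\infty),e^{-t/b'}dt)$, i.e.\ in $L^2$ of a multiple of the exponential distribution. Density of \emph{all} polynomials there is classical (the exponential distribution has finite exponential moments, so its moment problem is determinate), and I would recover the sharper statement by a Laplace-transform argument: if $0\neq\phi\perp t^k$ for all $k\ge1$ in this $L^2$, then $F(z):=\int_0^\infty\phi(t)e^{-t/b'}e^{-zt}\,dt$ is, by Cauchy--Schwarz, analytic on $\{\Re z>-\tfrac1{2b'}\}$ with $F^{(k)}(0)=(-1)^k\int_0^\infty t^k\phi(t)e^{-t/b'}\,dt=0$ for all $k\ge1$; thus $F$ is constant near $0$, hence constant on its connected domain by analytic continuation, hence $\equiv0$ since $F(z)\to0$ as $\Re z\to\infty$; injectivity of the Laplace transform forces $\phi=0$, a contradiction. (Alternatively, cite a Müntz-type density theorem, valid since $\sum_{k\ge1}1/k=\infty$.) Choosing $S_K\to v_0$ in $L^2(\omega')$ via this density then completes the proof.
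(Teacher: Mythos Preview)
Your proposal is correct and follows essentially the same architecture as the paper's proof: both split the error at $x=L$, reduce to approximating $G_h=h-b^2h''$ on $(-\infty,L]$ in an exponentially weighted norm, and build the approximating polynomials by first approximating $h''$ in a \emph{strictly heavier} exponential weight and then integrating twice from $L$ to recover the boundary conditions and control the growth introduced by integration. Your Cauchy--Schwarz bound $|p_K-h|(x)\lesssim\|R_K-h''\|_{L^2(\omega')}e^{-x/(2b')}$ is exactly the mechanism the paper uses (their Lemma~\ref{lem:finalApproxLemma}, where the weight jumps from $W_{L,4c}$ to $W_{L,2c}$ to $W_{L,c}$ across two integrations).

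The one genuine difference is the source of the density input. The paper imports a black-box weighted polynomial approximation theorem on $\R$ (Lubinsky's survey, their Theorem~\ref{thm:wgtPolyApprox}), approximates $h''$ by unrestricted polynomials, and then shifts by a constant to enforce $p_K''(L)=h''(L)$. You instead prove directly, via a Laplace-transform/completeness argument, that $\operatorname{span}\{t,t^2,\dots\}$ is dense in $L^2([0,\infty),e^{-t/b'}dt)$, which gives the constraint $S_K(L)=0$ for free. Your route is more self-contained and arguably cleaner for the specific $L^2$ setting needed here; the paper's route is more flexible (it works in all $L_p$, $1\le p\le\infty$, simultaneously) but at the cost of citing a nontrivial external result. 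Your explicit use of the quadratic structure in Step~1 also streamlines the continuity argument compared to the paper's hands-on decomposition in Equation~\eqref{eq:approxErrFinalDecomp}.

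One minor remark: your Step~2 implicitly needs $h''\in L^2(\omega')$, i.e.\ that $h''$ has at most polynomial growth. The paper leans on the same implicit reading of ``tempered distribution'' (see their claim that $\|gW_{L,1/2b}\|_{L_\infty}<\infty$), so this is not a defect relative to the paper, but it is worth stating the hypothesis you are actually using.
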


See Appendix~\ref{App:section4} for a proof.

Now, letting $h$ be a polynomial, suppose our estimator is $g(x) = \tilde{f}(x) - b^2 \tilde{f}''(x)$ for $\tilde{f}$ defined in Equation~\ref{eq:f-tilde-def}. For our error metric, we consider the estimator's expected squared error, with the expectation taken over both the privacy noise and prior beliefs about $q$, reflected in the probability measure $\mu$. We define our estimator as the solution to the following constrained optimization problem:
\begin{align}&\min_{\tilde{f}} \int_{-\infty}^\infty \int_L^\infty \left(\tilde{f}(x)-b^2 \tilde{f}''(x) - f(q)\right)^2 \frac{1}{2b} e^{-|x-q|/b} d\mu(q) dx\\ \nonumber
&= \int_{L}^\infty \int_L^\infty \left(f(x) - b^2 f''(x) -f(q)\right)^2 \frac{1}{2b} e^{-|x-q|/b} d\mu(q) dx \\ \nonumber
&+ \min_g \int_{-\infty}^L \int_L^\infty \left(g(x)-f(q)\right)^2 \frac{1}{2b} e^{(x-q)/b} d\mu(q) dx \\ \nonumber
&\text{subject to $h(L) = f(L), h'(L) = f'(L)$, and $h''(L) = f''(L)$.}
\end{align}

Since the first double integral is constant with respect to $h$, optimizing this error metric is equivalent to optimizing
\begin{align}
    \min_g \int_{-\infty}^L \int_L^\infty \left(g(x)-f(q)\right)^2 \frac{1}{2b} e^{(x-q)/b} d\mu(q) dx,
\end{align}
subject to the same constraints.

For simplicity, we shall now treat $g$ as a function with domain $(-\infty,L]$, as that is the only region on which we are optimizing, so $g(x) = \sum_{i=0}^k a_i x^i$. There is a one-to-one correspondence between polynomials $g(x)$ and polynomials $h(x) = \sum_{i=0}^k b_i x^i$ where $a_i = b_i - b^2(i+2)(i+1)b_{i+2}$. Thus, we are considering extensions of $\tilde{f}(q)$ where the part to the left of the lower bound $L$ is a polynomial.

\begin{theorem}\label{thm:optimal}
    For any positive integer $k$, any real number $L \in \RR$, and any function $f$ which is twice differentiable on $[L,\infty)$, there is an algorithm that runs in time $poly(k)$ which computes the polynomial $g$ that minimizes 
    $$\int_{-\infty}^L \int_L^\infty \left(g(x)-f(q)\right)^2 \frac{1}{2b} e^{(x-q)/b} d\mu(q) dx$$
    over polynomials of degree $k$, satisfying the constraints $g(x) = h(x) - b^2 h''(x)$, $h(L) = f(L), h'(L) = f'(L)$, and $h''(L) = f''(L)$.
\end{theorem}

\begin{proof}
    See Appendix~\ref{app:thm15}.
\end{proof}

\begin{corollary}
    Provided that the optimal choice of $h$ is a twice continuously differentiable tempered distribution, there exists an efficient algorithm to approximate the optimal unbiased estimator of $f(q)$ given $q+Z$ for $Z \sim \Lap(0,b)$ and the prior knowledge that $q \ge L$.
\end{corollary}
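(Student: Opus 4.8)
The plan is to combine the finite-degree optimization of Theorem~\ref{thm:optimal} with the approximation guarantee of Theorem~\ref{thm:polyApprox}, and conclude that optimizing over polynomials of growing degree drives the expected squared error down to the global optimum over all twice continuously differentiable tempered-distribution extensions. First I would recall that every extension $\tilde f[h]$ built from a twice-differentiable tempered distribution $h$ with $h(L)=f(L)$, $h'(L)=f'(L)$, $h''(L)=f''(L)$ yields, by Theorem~\ref{thm:unbiased} together with the argument of Section~\ref{sec:unbiasLaplaceNonTempDist}, an estimator $g=\tilde f[h]-b^2\tilde f[h]''$ that is unbiased for $f(q)$ whenever $q\ge L$. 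Let $\alpha^\star$ be the infimum of the expected squared error over all such $h$; by the hypothesis of the corollary this infimum is attained by some twice continuously differentiable tempered distribution $h^\star$, with error exactly $\alpha^\star$.

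Next I would apply Theorem~\ref{thm:polyApprox} with $h=h^\star$: it produces a sequence of polynomials $(p_K)_{K=1}^\infty$ on $(-\infty,L]$, each satisfying the three derivative-matching constraints at $L$, whose associated estimators $g_K=\tilde f[p_K]-b^2\tilde f[p_K]''$ have expected squared error converging to $\alpha^\star$. Fix a target accuracy $\varepsilon>0$, choose $K$ so that the error of $g_K$ is at most $\alpha^\star+\varepsilon$, and set $n=\deg p_K$. Then I would invoke Theorem~\ref{thm:optimal} with this $n$, using the one-to-one correspondence $a_i=b_i-b^2(i+2)(i+1)b_{i+2}$ to translate between the estimator polynomial $g$ and the extension polynomial $h$ so that the constraints $h(L)=f(L)$, $h'(L)=f'(L)$, $h''(L)=f''(L)$ become linear constraints on $g$. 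In time $\mathrm{poly}(n)$ this returns the degree-$n$ polynomial estimator $g^\star_n$ of minimal expected squared error subject to the constraints.

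Finally, I would bound the error of $g^\star_n$ from both sides. Since $g_K$ is itself a feasible degree-$n$ competitor, the error of $g^\star_n$ is at most that of $g_K$, hence at most $\alpha^\star+\varepsilon$; and since polynomial extensions form a subset of the twice continuously differentiable tempered-distribution extensions, the error of $g^\star_n$ is also at least $\alpha^\star$. Thus $g^\star_n$ is an efficiently computable estimator, unbiased for $f(q)$ under the assumption $q\ge L$, whose expected squared error is within $\varepsilon$ of the optimal value $\alpha^\star$, which is exactly the claim.

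The step I expect to require the most care — though it is a caveat rather than a genuine obstacle — is that Theorem~\ref{thm:polyApprox} gives no rate for how quickly the error of $g_K$ approaches $\alpha^\star$, so the degree $n=n(\varepsilon)$ needed to reach accuracy $\varepsilon$ is not controlled; ``efficient'' here should be read as polynomial in the degree actually used, not polynomial in $1/\varepsilon$. I would also want to verify that the hypotheses of Theorem~\ref{thm:polyApprox} (finiteness of $\int_L^\infty f(q)\,e^{(L-q)/b}\,d\mu(q)$ and $\int_L^\infty e^{(L-q)/b}\,d\mu(q)$) hold in the setting of the corollary, and that the constraint translation in Theorem~\ref{thm:optimal} between $g$ and $h$ preserves the three derivative-matching conditions exactly.
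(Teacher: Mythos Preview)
Your proposal is correct and follows the same approach as the paper: the corollary is obtained by combining Theorem~\ref{thm:polyApprox} (polynomials can approximate the optimal twice continuously differentiable tempered-distribution extension arbitrarily well in expected squared error) with Theorem~\ref{thm:optimal} (the optimal degree-$n$ polynomial extension can be computed in $\mathrm{poly}(n)$ time). The paper's own proof is a one-line remark that optimizing $g$ is equivalent to optimizing $\tilde f$, relying on the surrounding discussion to supply the two ingredients you have spelled out explicitly; your sandwich argument and your caveats (no rate in $1/\varepsilon$, integrability hypotheses, and the $g\leftrightarrow h$ constraint translation) are all appropriate and go beyond what the paper states.
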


This follows immediately from the fact that optimizing $g$ also optimizes $\tilde{f}(q)$.

Note that this result can be easily extended to the cases where we have only an upper bound or both an upper and lower bound. If we only have an upper bound, everything works out exactly the same as if we only have a lower bound. If we have both, suppose we know that $q \in [L,U]$ and define
\begin{align}
    \tilde{f}(q) = \left\{\begin{matrix} h_0(q) & q > U, \\
    f(q) & U \ge q \ge L,\\
    h_1(q) & q < L.
    \end{matrix}  \right.
\end{align}

Then the expected error (with the expectation over both the privacy noise and the prior on $q$) is
\begin{align}
  &\mathbb{E}_{q, x \sim q + Lap(0,b)} \left[ \left(\tilde{f}(x) - b^2 \tilde{f}''(x) - f(q)\right)^2 \right] \\
  &= \int_{-\infty}^\infty \int_L^U \left(\tilde{f}(x) - b^2 \tilde{f}''(x) - f(q)\right)^2 \frac{1}{2b} e^{-|x-q|/b} d\mu(q)dx. \nonumber
\end{align}

Just like before, the error incurred by $\tilde{f}(x)$ on $L \le x \le U$ is not affected by our choice of functions. Thus, we wish to compute
\begin{align}\min_{h,h'}&\int_{-\infty}^L \int_L^U \left(h_1(x) - b^2 h_1''(x) - f(q)\right)^2 \frac{1}{2b} e^{-|x-q|/b} d\mu(q)dx\\
+& \int_{U}^\infty \int_L^U \left(h_0(x) - b^2 h_0''(x) - f(q)\right)^2 \frac{1}{2b} e^{-|x-q|/b} d\mu(q)dx. \nonumber
\end{align}

Since there is no interaction between $h_0$ and $h_1$, we can minimize these integrals independently in the same way as above.

\section{Numerical Results: Application to Mean Queries}
\label{sec:Numerical_Means}

In this section, we illustrate the utility of our results by numerically comparing two mechanisms designed to return unbiased estimates of the sample size $n$ and the mean $m$ of an attribute $c \in [0,1]$ in the database $D$. Sample sizes are published alongside any reported means in most research applications, making this a realistic use case. One mechanism, $M_U$, returns an unbiased estimate of the mean using the results from Section~\ref{subsec:UnbiasEstForInverse}. The other, $M_{SS}$, uses the unbiased mean mechanism from \cite{KamathEtAl2023Trilemma} (see their Theorem D.6 and proof). To the best of our knowledge, this is the only published unbiased mechanism for means when the sample size is not treated as known. Both mechanisms use the Laplace mechanism with privacy budget $\epsilon_1$ to obtain the noisy sample size $\tilde{n}$. Each mechanism then allocates a separate privacy budget $\epsilon_2$ to obtain a noisy mean. Both mechanisms have a total privacy budget of $\epsilon_0 = \epsilon_1 + \epsilon_2$.

Denote attribute $c$ of record $r$ by $r.c$ and let $g(\tilde{q};k,L)$ be the unbiased estimator of $1/q$ from Section~\ref{subsec:UnbiasEstForInverse} with the generic query $q$ and polynomial extension of order $k$ for $q \leq L$. Algorithm~\ref{alg:OurMeanMech} lays out $M_U$. This algorithm applies the Laplace mechanism to the sum $s \equiv \sum_{r \in D} r.c$ and forms an unbiased estimate $\tilde{m}_U$ of the mean $m = s/n$ by multiplying the noisy sum $\tilde{s}$ by $g(\tilde{n};k,L)$. This is unbiased for $m$ as long as $n \geq L$.

For $n \geq L$, the variance of Algorithm~\ref{alg:OurMeanMech} is 
\begin{align}
    \V[\tilde{m}_U] &= (\E[\tilde{s}]^2 + \V[\tilde{s}])(\E[g(\tilde{n};k,L)]^2 + \V[g(\tilde{n};k,L)]) - \E[\tilde{s}]^2\E[g(\tilde{n};K,L)]^2 \\
    &= (s^2 + 2/\epsilon_1^2)(\E[1/n]^2 + \V[g(\tilde{n};k,L)]) - s^2/n^2.
\end{align}
For this section's numerical results, we calculate $\V[g(\tilde{n};k,L)]$ numerically. Finally, we note that it is straightforward to show that all moments of $\tilde{m}_U$ exist and are finite.

\begin{algorithm}

\caption{\label{alg:OurMeanMech} $M_U$}
\begin{algorithmic}[1]

    \Procedure{MeanPostprocess}{Database $D$, sample size privacy budget $\epsilon_1$, mean privacy budget $\epsilon_2$, polynomial order $k$, sample size lower bound $L$}
    \State $n \leftarrow \sum_{r \in D} 1$
    \State $\tilde{n} \leftarrow \Lap(n, 1/\epsilon_1)$ 
    \State $s \leftarrow \sum_{r \in D} r.c$
    \State $\tilde{s} \leftarrow \Lap(s, 1/\epsilon_2)$ 
    \State $\tilde{v} \leftarrow g(\tilde{n};k,L)$ \text{ (Unbiased estimator of $1/n$ for $n \geq L$)}
    \State $\tilde{m}_U \leftarrow \tilde{s}\tilde{v}$
    \State \textbf{output} $(\tilde{n}, \tilde{m}_U)$
    \EndProcedure
\end{algorithmic}    
\end{algorithm}

Let $T_3$ denote a random variable distributed according to a standard $t$ distribution with 3 degrees of freedom. $M_{SS}$ is laid out in Algorithm~\ref{alg:SSMeanMech}. This algorithm first forms a version $m_{SS}$ of the mean query that simply equals 1 if $n = 0$. It then scales the noise variable $T_3$ in proportion to an upper bound on the query's smooth sensitivity  \cite{KamathEtAl2023Trilemma}. The final noisy mean $\tilde{m}_{SS}$ is obtained by simply adding the scaled noise variable to $m_{SS}$.

The scaling factor for the noise is $\tau \max(e^{-\beta(n-1)}, 1/\max(n,1))$, where $\tau$ and $\beta$ satisfy $\epsilon_2 = 4\beta + 2/(\sqrt{3}\tau)$. The standard $t$ distribution with $d$ degrees of freedom has variance $d/(d-2)$ giving $\tilde{m}_{SS}$ a variance of 
\begin{align}
    \V[m_{SS}] = 3 \tau^2 \max(e^{-\beta(n-1)}, 1/\max(n,1))^2.
\end{align}
Because the $t$ distribution is symmetric, this mechanism is unbiased for $s/n$ as long as $n \geq 1$. Unlike $\tilde{m}_U$, however, the third and higher moments of $\tilde{m}_{SS}$ are infinite or do not exist. This is due to the $t$ distribution's very fat tails and implies that $\tilde{m}_{SS}$ is more liable than $\tilde{m}_U$ to produce extreme outliers.

\begin{algorithm}
\caption{\label{alg:SSMeanMech} $M_{SS}$}
\begin{algorithmic}[1]
    \Procedure{MeanSmoothSens}{Database $D$, sample size privacy budget $\epsilon_1$, mean privacy budget $\epsilon_2$, noise parameters $\beta,\tau$ satisfying $\epsilon_2 = 4\beta + 2/(\sqrt{3}\tau)$}
    \State $n \leftarrow \sum_{r \in D} 1$
    \State $\tilde{n} \leftarrow \Lap(n, 1/\epsilon_1)$
    \State $s \leftarrow \sum_{r \in D} r.c$
    \State $m_{SS} \leftarrow \textbf{ if $n \geq 1$: $s/n$ else: $1$}$
    \State $\tilde{m}_{SS} \leftarrow m_0 + T_3 \cdot \tau \max(e^{-\beta(n-1)}, 1/\max(n,1))$
    \State \textbf{output} $(\tilde{n}, \tilde{m}_{SS})$
    \EndProcedure
\end{algorithmic}    
\end{algorithm}

In our numerical evaluation of these mechanisms, we fix $\epsilon_1 = \epsilon_2 = m = .5$ for both mechanisms. For $M_{SS}$, we follow \cite{KamathEtAl2023Trilemma} in setting $\beta = \epsilon_2/12$ and $\tau = \sqrt{3}/\epsilon_2$. For $M_U$, we set $L = 1$ so that both mechanisms are unbiased for $n \geq 1$ and set $k=10$. 

With these settings, we compare the standard deviations (SDs) of the two mechanisms' mean estimates for a range of sample sizes. Because the mechanisms are unbiased, this is equivalent to their root mean squared error. The sample size estimates of both mechanisms are the same, so we do not report their properties.

Figures~\ref{fig:MnAppSDsBig} and \ref{fig:MnAppSDsZoom} plot the mechanisms' SDs as functions of $n$, with Figure~\ref{fig:MnAppSDsZoom} zooming in on larger values of $n$ for clarity. It is immediately clear that the $M_U$ has a larger SD than $M_{SS}$ for $n < 13$, and that this pattern reverses for larger $n$. For $n \leq 19$, however, both mechanisms have SDs greater than 1, making both unfit for most purposes at these sample sizes, given that the mean $m$ has the domain [0,1].

Figure~\ref{fig:MnAppSDsRel} shows the relative SD - that is, the ratio $SD(M_{SS})/SD(M_U)$ - as a function of $n$. For $n \geq 13$, the relative SD rises to a peak near 15 before settling down to an apparent constant of about 1.9 for $n \geq 115$.

Ultimately, $M_U$ appears to be the better mechanism for this setting; for any sample size where either mechanism returns useful results, $M_U$ has a substantially lower SD. The thinner tails of $M_U$ also recommend it as the better choice.

\begin{figure}[H]
    \centering
    \includegraphics[width=0.5\linewidth]{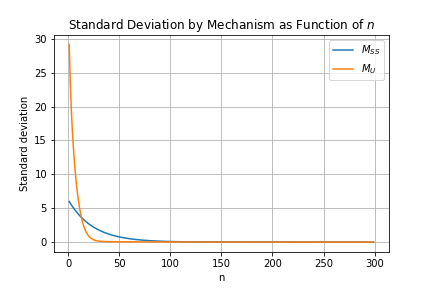}
    \caption{Standard deviations of the mechanisms $M_{SS}$ and $M_U$ for a mean of $n$ records in [0,1]. The mean is fixed at .5 and the mechanisms have a privacy budget of $\epsilon_2 = .5$.}
    \label{fig:MnAppSDsBig}
\end{figure}

\begin{figure}[H]
    \centering
    \includegraphics[width=0.5\linewidth]{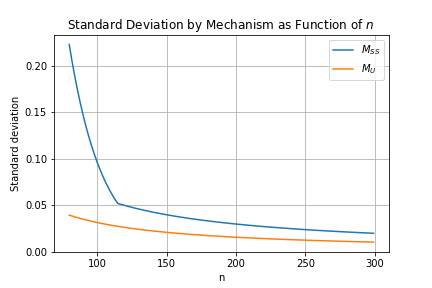}
    \caption{Same as Figure~\ref{fig:MnAppSDsBig}, zoomed in to larger values of $n$ (see the horizontal axis endpoints). Standard deviations of the mechanisms $M_{SS}$ and $M_U$ for a mean of $n$ records in [0,1]. The mean is fixed at .5 and the mechanisms have a privacy budget of $\epsilon_2 = .5$.}
    \label{fig:MnAppSDsZoom}
\end{figure}

\begin{figure}[H]
    \centering
    \includegraphics[width=0.5\linewidth]{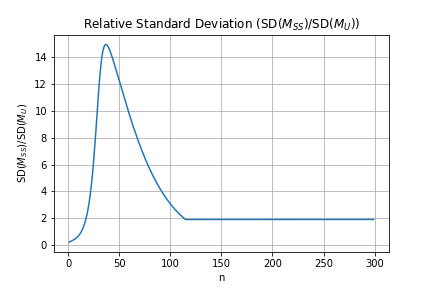}
    \caption{Standard deviation of $M_{SS}$ divided by the standard deviation of $M_U$ for a mean of $n$ records in [0,1]. The mean is fixed at .5 and the mechanisms have a privacy budget of $\epsilon_2 = .5$.}
    \label{fig:MnAppSDsRel}
\end{figure}


\section{Application to Slowly Scaling PRDP}
\label{sec:PRDP}

In this section, we use our estimators to develop versions of unbiased privacy mechanisms from \cite{finleyEtAl2024} that enjoy stronger privacy guarantees. Our estimators allow us to do so while maintaining the mechanisms' unbiasedness.

$(\epsilon,0)$-DP guarantees the upper bound $\epsilon$ on the privacy loss between any pair of neighboring databases. \cite{finleyEtAl2024} develops mechanisms for a related privacy concept, per-record DP (PRDP) \cite{seemanEtAl2024}. PRDP generalizes ($\epsilon,0$)-DP by letting the privacy loss bound be a function of the record on which a given pair of neighbors differs. Semantically, this allows different records to have different levels of protection.

Denote a record in the database $D$ by $r$ and denote $r$'s attribute $c$ by $r.c \in [0,\infty)$. PRDP was originally motivated by the need to protect data used to compute the sum query $q(D) = \sum_{r.c \in D} r.c$. Because the domain of $c$ is unbounded, this sum can change by an arbitrarily large amount when a record is added or deleted. That is, the sum's global sensitivity is infinite. This prevents commonly used privacy mechanisms, such as the Laplace mechanism, from providing a differential privacy guarantee with finite $\epsilon$.

The traditional fix for this is to clip attribute $c$ to lie in a bounded set before taking the sum. $(\epsilon,0)$-DP can then be guaranteed by perturbing the sum with noise scaled in proportion to the width of the clipped data's domain. Unfortunately, when the sum is dominated by a small number of large outliers, the outliers typically need to be clipped to drastically smaller values to preserve a reasonable balance of privacy loss and noise variance. This can induce catastrophic bias, rendering the clipped sums essentially useless. One might expect to see this type of behavior with income data, for example. 

PRDP allows us to take a finer look at the privacy-utility tradeoff by recognizing that, even though outliers may suffer extreme privacy loss, the rest of the dataset may still enjoy strong privacy protections. Intuitively, a particular record's privacy loss is proportional only to the amount by which the addition or deletion of \textit{that record} can change the query. Queries may be highly sensitive to the presence of outliers while being relatively insensitive to typical records, leading different records to have different levels of privacy loss. The reassurance that the vast majority of the data may enjoy strong privacy guarantees whether or not the data is clipped may allow a data curator to reasonably decide against clipping if the resulting bias outweighs the enhanced privacy protection for a small number of records.

Below, we define PRDP.

\begin{definition}[$P$-Per-Record Differential Privacy ($P$-PRDP) \cite{seemanEtAl2024, finleyEtAl2024}]
\label{def:PRDP}
Let $\ominus$ denote the symmetric set difference. The mechanism $M$ satisfies per-record differential privacy with the policy function $P$ ($P$-PRDP) if, for any record $r$; any pair of neighboring databases $D, D'$ such that $D \ominus D' = \{r\}$; and any measurable set of possible outcomes $S$, we have
$$
\Pr[\mathcal{M}(D) \in S] < e^{P(r)} \cdot \Pr[\mathcal{M}(D') \in S].
$$
\end{definition}

Ensuring strong privacy protection corresponds to ensuring that $P$ is, in some sense, small. $(\epsilon,0)$-DP is recovered by making the constant privacy guarantee $P(r) = \epsilon$ for all $r$, and strong privacy protection follows from a small $\epsilon$. We cannot always make a guarantee this strong. Take the example where we want to publish a sum query on an unbounded attribute $c$ that we are unwilling to clip. In this case, the privacy loss of the mechanisms that we will consider here is growing in $r.c$. Even though we cannot prevent $P(r)$ from growing without bound in $r.c$, we can use mechanisms for which the growth rate is slow. \cite{finleyEtAl2024} call such mechanisms ``slowly scaling.'' A slowly growing $P$ narrows the gap in privacy losses between records with large and small values of $c$, letting a data curator more easily provide a desired level of protection for the bulk of the data without compromising too much on the privacy of outliers.

\cite{finleyEtAl2024} introduces slowly scaling mechanisms, called transformation mechanisms, that work by adding Gaussian noise to a concave transformation $f$ of the query (plus offset term) $q(D)+a$ and then feeding the noisy value of $f(q(D)+a)$ to an estimator $g$ of $q(D)$. By adding Gaussian noise, these mechanisms satisfy per-record zero-concentrated DP (PRzCDP), which is a weaker privacy guarantee than PRDP. PRzCDP relates to zero-concentrated DP \cite{BunStienke2016zCDP} in the same way that PRDP relates to $\epsilon$-DP. The use of Gaussian noise also allowed \cite{finleyEtAl2024} to draw on existing unbiased estimators from \cite{WashioEtAl1956} to make their mechanism unbiased for a variety of transformation functions $f$.

Using the unbiased estimators from Theorem~\ref{thm:unbiased}, we strengthen the transformation mechanisms to provide PRDP guarantees by adding Laplace, rather than Gaussian, noise, and we do so without losing the mechanism's unbiasedness. Algorithm~\ref{alg:transform} lays out our transformation mechanism.

\begin{algorithm}[H]
\caption{\label{alg:transform} PRDP Transformation Mechanism}
\begin{algorithmic}[1]
    \Procedure{TransformationPrivatizeLap}{Private query answer $q(D)$,  offset parameter $a$, \newline scale parameter $b$, transformation function $f:[a,\infty)\to \mathcal{F} \subseteq \R$, estimator $g:\mathcal{F} \to \mathcal{G} \subseteq \R$} 
    \State $v \leftarrow f(q(D) + a)$
    \State $\tilde{v} \leftarrow \Lap(v,b)$ \label{line:transform_noise}
    \State $\tilde{S}\leftarrow g(\tilde{v})$
    \State \textbf{output} $\tilde{S}$
    \EndProcedure
\end{algorithmic}    
\end{algorithm}

To obtain the PRDP guarantee of Algorithm~\ref{alg:transform}, we first need to define the per-record sensitivity \cite{finleyEtAl2024}, a record-specific analog of the global sensitivity.

\begin{definition}[Per-Record Sensitivity \cite{finleyEtAl2024}] \label{def:PR-sensitivity}
The per-record sensitivity of the univariate, real-valued query $q$ for record $r$ is $$\Delta(r) \equiv \sup_{D,D^\prime \text{ such that } D \ominus D^\prime = \{r\}} |q(D) - q(D^\prime)|.$$
\end{definition}

Theorem~\ref{thm:transformPRDP} gives our most generic result on the PRDP guarantees of Algorithm~\ref{alg:transform}. Theorems~\ref{thm:transformPRDP} and \ref{thm:privTrnsSum}, their proofs, as well as Algorithm~\ref{alg:transform} are minimally modified from their analogs in \cite{finleyEtAl2024}, which use Gaussian, rather than Laplace noise. This is to facilitate the interested reader's comparison of our results with theirs.

\begin{theorem}[PRDP Guarantee for Transformation Mechanisms]
\label{thm:transformPRDP}
    Assume the query value $q(D) \in [0,\infty)$; the offset parameter $a \in \R$; the noise scale parameter $b \in (0,\infty)$; the transformation function $f:[a,\infty) \to \mathcal{F} \subseteq \R$ is concave and strictly increasing; and the estimator $g:\mathcal{F} \to \mathcal{G} \subseteq \R$. Denote by $\Delta_f(r)$ the per-record sensitivity of the query $f(q(D) + a)$, as defined in Definition~\ref{def:PR-sensitivity}. Algorithm~$\ref{alg:transform}\allowbreak(q(D), a, b, f, g)$ satisfies $P$-PRDP for $P(r) = \frac{\Delta_f(r)}{b}$.
\end{theorem}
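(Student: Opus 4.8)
The plan is to mimic the classical privacy proof of the Laplace mechanism, with global sensitivity everywhere replaced by the per-record sensitivity $\Delta_f(r)$, and with one extra invocation of post-processing to dispose of the estimator $g$.

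First I would observe that Algorithm~\ref{alg:transform} is the composition of two steps: the randomized map $D \mapsto \tilde v$, where $\tilde v = f(q(D)+a) + Z$ with $Z \sim \Lap(0,b)$, followed by the deterministic (measurable) map $\tilde v \mapsto g(\tilde v) = \tilde S$. Since for any measurable $S$ we have $\Pr[g(\tilde v) \in S] = \Pr[\tilde v \in g^{-1}(S)]$ and $g^{-1}(S)$ is again measurable, it suffices to prove the $P$-PRDP bound for the intermediate mechanism $D \mapsto \tilde v$; the bound then transfers to $\tilde S$ verbatim. This is the per-record analogue of the closure of differential privacy under post-processing.

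Second, fix neighboring databases $D, D'$ with $D \ominus D' = \{r\}$ and write $v \equiv f(q(D)+a)$, $v' \equiv f(q(D')+a)$; both are well-defined reals because $q(D), q(D') \in [0,\infty)$, so their arguments lie in the domain $[a,\infty)$ of $f$. The density of $\tilde v$ under $D$ is $t \mapsto \frac{1}{2b}e^{-|t-v|/b}$ and under $D'$ is $t \mapsto \frac{1}{2b}e^{-|t-v'|/b}$, so their pointwise ratio is $\exp\!\big((|t-v'|-|t-v|)/b\big)$, which by the triangle inequality is at most $\exp(|v-v'|/b)$ for every $t$. Integrating over any measurable $S$ gives $\Pr[\tilde v \in S \mid D] \le e^{|v-v'|/b}\,\Pr[\tilde v \in S \mid D']$, and symmetrically with $D, D'$ interchanged. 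Third, by Definition~\ref{def:PR-sensitivity}, $\Delta_f(r)$ is exactly the supremum of $|f(q(D)+a) - f(q(D')+a)|$ over all neighbor pairs whose symmetric difference is $\{r\}$, so $|v-v'| \le \Delta_f(r)$; substituting this into the previous display and recalling the post-processing reduction yields $\Pr[\mathcal{M}(D) \in S] \le e^{\Delta_f(r)/b}\,\Pr[\mathcal{M}(D') \in S]$, i.e. $P$-PRDP with $P(r) = \Delta_f(r)/b$.

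The only genuinely delicate points — and where I would be most careful — are: (i) making the post-processing step airtight (measurability of $g$ and of the preimages $g^{-1}(S)$), since the entire reduction rests on it; and (ii) reconciling the strict inequality demanded by Definition~\ref{def:PRDP} with the non-strict pointwise bound $e^{\Delta_f(r)/b}$, which is handled exactly as in the standard Laplace-mechanism argument (the ratio of the two integrals is strictly below the pointwise supremum unless the densities agree a.e.\ on $S$, an edge case that can be absorbed). I would also remark that concavity and strict monotonicity of $f$ play no role in this generic bound — they only matter when one wants an explicit, slowly growing closed form for $\Delta_f(r)$, as in Theorem~\ref{thm:privTrnsSum} — so for this theorem $f$ need only be a real-valued function on $[a,\infty)$.
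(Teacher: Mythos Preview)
Your argument is correct and is essentially the same as the paper's: the paper phrases the two steps as (i) the data-processing inequality for R\'enyi divergence of order~$\infty$ applied to the map $g$, and (ii) the known formula $d_\infty(\Lap(v,b)\|\Lap(v',b)) = |v-v'|/b$, but unwinding those statements gives exactly your density-ratio computation and post-processing reduction. Your observation that concavity and strict monotonicity of $f$ are unused here is also correct; the paper invokes those hypotheses only in Theorem~\ref{thm:privTrnsSum}.
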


See Appendix~\ref{app:PRDPProofs} for the proof.

Probably the most common query encountered in applications of formal privacy, even as a component of other, larger queries, is the sum query. We now use the above result to derive the policy function for the transformation mechanism applied to a sum query.

\begin{theorem}[Privacy of Transformation Mechanisms for Sum Query]
\label{thm:privTrnsSum}
Let the assumptions of Theorem~\ref{thm:transformPRDP} hold, and further assume $a \geq 0$ and $r.c \geq 0$ for all records $r$. For the sum query $q(D) = \sum_{r \in D} r.c$, the per-record sensitivity of $f(q(D) + a)$ is $\Delta_f(r) = f(r.c + a) - f(a)$ and Algorithm~\ref{alg:transform}($q(D),a,b,f,g$) satisfies PRDP with the policy function $P(r) = [f(r.c + a) - f(a)]/b$.
\end{theorem}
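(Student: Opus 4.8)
<br>

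The plan is to combine Theorem~\ref{thm:transformPRDP} with a direct computation of the per-record sensitivity of the composed query $q_f(D) \equiv f(q(D)+a)$ for the sum query. Once we show $\Delta_f(r) = f(r.c + a) - f(a)$, the claimed policy function $P(r) = [f(r.c+a) - f(a)]/b$ follows immediately by plugging into the conclusion $P(r) = \Delta_f(r)/b$ of Theorem~\ref{thm:transformPRDP}. So the entire content of the proof is the sensitivity calculation.

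First I would unpack the definition of per-record sensitivity: $\Delta_f(r) = \sup |f(q(D)+a) - f(q(D')+a)|$ over neighboring $D, D'$ with $D \ominus D' = \{r\}$. By symmetry assume $D' = D \cup \{r\}$, so $q(D') = q(D) + r.c$; write $t \equiv q(D)$, which ranges over $[0,\infty)$ as $D$ varies over databases not containing $r$ (using $a \geq 0$, $r.c \geq 0$, and the fact that the other records contribute nonnegatively, so in fact $t + a \geq a \geq 0$ stays in the domain $[a,\infty)$ of $f$). Then $\Delta_f(r) = \sup_{t \geq 0} \bigl| f(t + r.c + a) - f(t + a) \bigr|$. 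Since $f$ is strictly increasing, $f(t + r.c + a) - f(t + a) \geq 0$, so the absolute value is unnecessary and we are maximizing the increment $f(s + r.c) - f(s)$ over $s = t + a \geq a$.

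The key step is to argue that this increment is maximized at $s = a$, i.e. at $t = 0$. This is where concavity enters: for a concave function, the increment over an interval of fixed length $r.c$ is nonincreasing as the interval slides to the right. Concretely, $g(s) \equiv f(s + r.c) - f(s)$ has $g'(s) = f'(s + r.c) - f'(s) \leq 0$ because $f'$ is nonincreasing (concavity), so $g$ is nonincreasing on $[a,\infty)$ and attains its supremum at $s = a$, giving $\Delta_f(r) = f(a + r.c) - f(a) = f(r.c + a) - f(a)$. If one wants to avoid assuming differentiability of $f$, the same conclusion follows from the discrete concavity inequality: for $a \leq s$ and $\lambda \in [0,1]$ chosen so that $s = \lambda(s + r.c) + (1-\lambda) a$ with the analogous representation at the shifted point, the slope-monotonicity of secant lines for concave functions yields $f(s+r.c)-f(s) \le f(a+r.c)-f(a)$ directly.

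The main (and really only) obstacle is a bookkeeping one rather than a conceptual one: being careful that the supremum over databases really does correspond to the supremum over $t \geq 0$ (or $s \geq a$) with no further constraints — in particular that $t = 0$ is achievable, which holds since the empty database (or any database disjoint from $r$ with zero attribute sum) is allowed — and handling the degenerate edge cases, e.g. $r.c = 0$ (where $\Delta_f(r) = 0$ trivially) and confirming that all arguments of $f$ stay within its stated domain $[a, \infty)$. None of this requires any computation beyond the concavity argument above; after establishing $\Delta_f(r) = f(r.c+a)-f(a)$, the PRDP statement is an immediate invocation of Theorem~\ref{thm:transformPRDP}.
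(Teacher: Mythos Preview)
Your proposal is correct and follows essentially the same route as the paper: reduce to computing $\Delta_f(r)$ for the sum query, assume WLOG $D' = D \cup \{r\}$, use monotonicity to drop the absolute value, use concavity to see that the increment $f(s+r.c)-f(s)$ is nonincreasing in $s$ so the supremum is attained at $s=a$, and then invoke Theorem~\ref{thm:transformPRDP}. Your write-up is in fact more careful than the paper's (you explicitly verify that $t=0$ is attainable and that all arguments stay in the domain of $f$), but the argument is the same.
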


See Appendix~\ref{app:PRDPProofs} for the proof.

Critically, the policy function from Theorem~\ref{thm:privTrnsSum} grows in $r.c$ more slowly when the transformation function $f$ grows more slowly. In the case where $a = 0$ and $f(x) = \sqrt[k]{x}$ for some $k \geq 1$, the policy function is simply $\sqrt[k]{r.c}$. Choosing larger values of $k$, then, forces the privacy loss to grow more slowly in $r.c$, reducing the gap in privacy losses between records with large and small values of $c$.

Applying our main results, we can obtain estimators such that the transformation mechanism gives us an unbiased estimate of $q(D)$. In particular, polynomials\footnote{In the case of polynomials, \cite{hillebrand2023unbiased} derived unbiased estimators which could also be used here. The estimator obtained from our Theorem~\ref{thm:unbiased} merely simplifies computation in this setting.} are twice differentiable functions which are tempered distributions, so the following holds for $f(x) = \sqrt[k]{x}$:
\begin{corollary}
    Given any function $f$ such that $f^{-1}$ satisfies the conditions in Theorem~\ref{thm:unbiased}, $a \in \RR$, $b \in (0,\infty)$, estimator $g:\mathcal{F} \rightarrow \mathcal{G} \subseteq \RR$, and $r.c \ge 0$ for all records $r$, there exists an unbiased estimator for $q(D)$ satisfying $P$-PRDP for $P(r) = \left[f(r.c + a)-f(a)\right]/b$.
\end{corollary}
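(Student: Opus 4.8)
The plan is to instantiate Algorithm~\ref{alg:transform} with the estimator obtained by applying Theorem~\ref{thm:unbiased} to the function $f^{-1}$, and then to read off the privacy guarantee from Theorem~\ref{thm:privTrnsSum} together with the postprocessing property of PRDP.

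First I would set $v \equiv f(q(D)+a)$, which is exactly the quantity to which Algorithm~\ref{alg:transform} adds Laplace noise on line~\ref{line:transform_noise}, so that $\tilde v \sim v + \Lap(0,b)$. Define $\phi(x) \equiv f^{-1}(x) - a$; adding the constant $-a$ changes neither twice-differentiability nor the tempered-distribution property, so $\phi$ satisfies the hypotheses of Theorem~\ref{thm:unbiased} whenever $f^{-1}$ does, and moreover $\phi'' = (f^{-1})''$. Applying the first bullet of Theorem~\ref{thm:unbiased} with the ``query'' taken to be $v$, the estimator
\[
    g(x) \;\equiv\; \phi(x) - b^2 \phi''(x) \;=\; f^{-1}(x) - a - b^2 (f^{-1})''(x)
\]
satisfies $\E[g(\tilde v)] = \phi(v) = f^{-1}\bigl(f(q(D)+a)\bigr) - a = q(D)$. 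Hence the output $\tilde S = g(\tilde v)$ of Algorithm~\ref{alg:transform}$(q(D),a,b,f,g)$ with this particular $g$ is unbiased for $q(D)$.

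Next I would establish the privacy guarantee. Under the standing assumptions inherited from Theorem~\ref{thm:privTrnsSum} ($f$ concave and strictly increasing, $a \ge 0$, $r.c \ge 0$ for all records), that theorem gives that Algorithm~\ref{alg:transform}$(q(D),a,b,f,g)$ satisfies $P$-PRDP with $P(r) = [f(r.c+a)-f(a)]/b$ for \emph{any} choice of estimator $g$: the noisy intermediate value $\tilde v$ already meets this PRDP bound, and $g$ is a data-independent, deterministic postprocessing of $\tilde v$, under which PRDP (like $\epsilon$-DP) is closed. Combining this with the unbiasedness established in the previous paragraph yields the claimed estimator.

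I do not expect a substantive obstacle, as the statement is essentially a repackaging of Theorems~\ref{thm:unbiased} and~\ref{thm:privTrnsSum}. The main point requiring care is the bookkeeping of which variable plays the role of the query: Theorem~\ref{thm:unbiased} must be applied with the transformed value $v = f(q(D)+a)$ as the underlying query and $\phi = f^{-1}(\cdot) - a$ as the target function, rather than with $q(D)$ directly. A secondary subtlety, already discussed in Section~\ref{sec:unbiasLaplaceNonTempDist}, is that $\tilde v$ may fall outside $\mathcal{F} = f([a,\infty))$, so $f^{-1}$ must be understood as its twice-differentiable, tempered-distribution extension to all of $\R$ --- which is exactly what the hypothesis ``$f^{-1}$ satisfies the conditions in Theorem~\ref{thm:unbiased}'' is meant to provide.
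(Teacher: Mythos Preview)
Your proposal is correct and follows essentially the same approach as the paper: apply Theorem~\ref{thm:unbiased} to $f^{-1}$ to get an estimator unbiased for $q(D)+a$ (equivalently, to $f^{-1}-a$ to hit $q(D)$ directly, as you do), and read off the PRDP guarantee from Theorem~\ref{thm:privTrnsSum} plus postprocessing. The only cosmetic difference is that the paper subtracts the offset $a$ after applying Theorem~\ref{thm:unbiased} rather than folding it into the target function, which yields the identical estimator since $(f^{-1}-a)'' = (f^{-1})''$.
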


\begin{proof}
    The conditions for Theorem~\ref{thm:privTrnsSum} hold, and Theorem~\ref{thm:unbiased} gives us a function $g$ such that $\E[g(\tilde{v})] = q(D) + a$ is an unbiased estimator for $f^{-1}$. Therefore, $g(\tilde{v}) - a$ is an unbiased estimator for $q(D)$.
\end{proof}

\section{Polynomial Functions under General Noise Distributions}
\label{sec:unbiasPoly}

Additive mechanisms other than the Laplace mechanism, such as the discrete Gaussian or the staircase mechanisms, may be preferable in practice due to achieving higher accuracy while having similar privacy loss \cite{CanonneEtAl2020DGM, staircase}.
In contrast to the Laplace case, these mechanisms may not admit tractable Fourier transforms, and hence unbiased estimators are generally not available in closed form.
One exception is when the query of interest is a polynomial in one or many queries. 

Using the following results to obtain an unbiased estimator of a polynomial that approximates a non-polynomial estimand may also allow users to obtain approximately unbiased estimators with great generality.

\begin{theorem}
\label{thm:unbiased-uni-poly}
    Suppose a mechanism takes as input $ q \in \RR$ and outputs $\tilde q = q+Z$ for a random variable $Z$ with at least $p$ finite, publicly known moments.
    If $f(q)$ is a polynomial in $q$ of degree at most $p$, there exists an unbiased estimator $g(\tilde q)$ of $f(q)$, which is itself a polynomial of degree at most $p$ and is available in closed form.
\end{theorem}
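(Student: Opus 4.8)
The plan is to construct the unbiased estimator by exploiting the fact that, for a random variable $Z$ with known moments $m_j \equiv \E[Z^j]$ for $j = 0, 1, \dots, p$, the map sending a polynomial to the expectation of its shifted version is an invertible linear operator on the space of polynomials of degree at most $p$. Concretely, write $f(q) = \sum_{k=0}^p c_k q^k$ and look for $g(\tilde q) = \sum_{j=0}^p a_j \tilde q^j$ such that $\E[g(q+Z)] = f(q)$ identically in $q$. Expanding $\E[(q+Z)^j] = \sum_{\ell=0}^j \binom{j}{\ell} q^{j-\ell} m_\ell$ by the binomial theorem, the unbiasedness condition becomes a polynomial identity in $q$; matching coefficients of each power $q^k$ gives a \emph{triangular} linear system relating the $a_j$ to the $c_k$.

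First I would set up this linear system explicitly: the coefficient of $q^k$ on the left side is $\sum_{j \ge k} a_j \binom{j}{k} m_{j-k}$, which must equal $c_k$. Writing $T$ for the $(p+1)\times(p+1)$ matrix with entries $T_{k,j} = \binom{j}{k} m_{j-k}$ for $j \ge k$ and $0$ otherwise, we need $T a = c$. Since $m_0 = 1$, the matrix $T$ is upper triangular with all diagonal entries $T_{k,k} = \binom{k}{k} m_0 = 1$, hence invertible; so $a = T^{-1} c$ exists, is unique, and is computed by back-substitution. This simultaneously yields existence, the closed form, and that $g$ has degree at most $p$ (indeed the same degree as $f$, since $a_p = c_p$). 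One should also note that all the expectations appearing are finite because only moments $m_0, \dots, m_p$ of $Z$ enter, and these are finite by hypothesis; and that unbiasedness holds for every fixed $q$ as required by Equation~\ref{eq:unbias}.

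A cleaner way to package the same argument, which I would mention as an alternative, is via the exponential/Taylor-operator identity: formally $\E[g(q+Z)] = \left(\E[e^{Z D}] g\right)(q)$ where $D = d/dq$, so one wants $g = \left(\E[e^{ZD}]\right)^{-1} f$; since $f$ is a polynomial of degree $\le p$, $D^{p+1} f = 0$, and only the truncation of the power series $\E[e^{ZD}] = \sum_{j=0}^{p} \frac{m_j}{j!} D^j$ (a polynomial differential operator with leading term the identity) acts nontrivially, so its inverse on this finite-dimensional space is again a polynomial differential operator, giving $g$ in closed form.

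I do not expect a serious obstacle here; this is essentially the method of moments / deconvolution specialized to the polynomial case. The one point requiring a little care is justifying the interchange of expectation and the finite sum (immediate by linearity of expectation for a finite sum of terms with finite expectation) and confirming invertibility of the coefficient matrix, which follows cleanly from $m_0 = 1$ making $T$ unitriangular. The mild subtlety worth flagging is that the estimator depends only on moments of $Z$ up to order $p$, so nothing is assumed about tail behavior or higher moments — consistent with the ``weak condition'' emphasized in the theorem statement.
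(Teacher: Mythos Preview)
Your proposal is correct and follows essentially the same approach as the paper: write $g$ as a polynomial with unknown coefficients, expand $\E[g(q+Z)]$ via the binomial theorem, and match coefficients of powers of $q$ to obtain an upper-triangular linear system whose diagonal entries are $\mu_0=1$, hence invertible. Your treatment is if anything slightly more explicit (you note unitriangularity and back-substitution, and offer the shift-operator reformulation as an alternative), but the core argument is identical to the paper's.
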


\begin{proof}
Suppose $f(q) = \sum_{n=0}^p b_n q^n$.
Let us find an unbiased estimator of the form $g(\tilde q) = \sum_{n=0}^p a_n \tilde q^p$. We have
\begin{align}
g(q+z) = \sum_{n=0}^p a_n \sum_{k=0}^n \binom{n}{k} q^k z^{n-k}.
\end{align}
Denote $\mu_r = \E[z^r]$ and take expectations to obtain
\begin{align}
\E[g(q+z)] &= \sum_{n=0}^p a_n \sum_{k=0}^n \binom{n}{k} q^k \mu_{n-k} \\    
&= \sum_{n=0}^p \binom{n}{0} a_n\mu_n + \left(\sum_{n=1}^p \binom{n}{1} a_n \mu_{n-1} \right) q + \left(\sum_{n=2}^p \binom{n}{2} a_n \mu_{n-2} \right) q^2 + \cdots 
\end{align}
For this polynomial to be equal to $f$, we need $a=(a_0,\dots,a_p)'$ to solve $M a = b$, where $b=(b_0,\dots,b_p)'$ and
\begin{align}
    M = 
    \begin{pmatrix}
        \mu_0 & \mu_1 & \mu_2 & \mu_3 & \cdots & \mu_p \\
        0     & \binom{1}{1} \mu_0 & \binom{2}{1} \mu_1 & \binom{3}{1} \mu_2 & \cdots & \binom{p}{1} \mu_{p-1} \\
        0 & 0 & \binom{2}{2}\mu_0 & \binom{3}{2} \mu_1 & \cdots & \binom{p}{2}\mu_{p-2} \\
        0 & 0 & 0 & \binom{3}{3} \mu_0 & \cdots & \binom{p}{3} \mu_{p-3} \\
        \vdots & \vdots & \vdots & & \ddots & \vdots \\
        0 & 0 & 0 & 0 & \cdots & \binom{p}{p} \mu_0
    \end{pmatrix}
    .
\end{align}
Clearly, $M$ is nondegenerate, and so the desired coefficients $a$ exist and are unique.
\end{proof}

We now extend this result to polynomials in multiple (univariate) queries, assuming that the noise variables added to each query are independent. The latter assumption, while seemingly restrictive, is typical for additive noise mechanisms in differential privacy.
\begin{theorem}
    Suppose a mechanism takes as input $(q_1,\dots, q_m)$ and outputs $(\tilde q_1,\dots,\tilde q_m) = (q_1+Z_1, q_2+Z_2, \dots, q_m+Z_m)$ for independent random variables $Z_1,\dots,Z_m$ with finite, publicly known moments.
    If $f(q_1,\dots,q_m)$ is a polynomial in $(q_1,\dots,q_m)$, there exists an unbiased estimator $g(\tilde q_1,\dots,\tilde q_m)$ of $f(q)$, which is itself a polynomial available in closed form.
\end{theorem}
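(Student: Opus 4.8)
The plan is to reduce the multivariate case to repeated application of the univariate Theorem~\ref{thm:unbiased-uni-poly}, exploiting the independence of the noise variables $Z_1,\dots,Z_m$. First I would observe that a polynomial $f(q_1,\dots,q_m)$ is a finite linear combination of monomials $q_1^{n_1} q_2^{n_2} \cdots q_m^{n_m}$, and since unbiasedness is linear in $f$ (an unbiased estimator of a sum is the sum of unbiased estimators), it suffices to construct an unbiased estimator for each monomial separately and then add them. So the whole problem reduces to: given a single monomial $q_1^{n_1}\cdots q_m^{n_m}$, find a polynomial $g(\tilde q_1,\dots,\tilde q_m)$ with $\E[g(\tilde q_1,\dots,\tilde q_m)] = q_1^{n_1}\cdots q_m^{n_m}$.

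For a single monomial, the key step is that the expectation factors across coordinates by independence: $\E\!\left[\prod_{j=1}^m h_j(\tilde q_j)\right] = \prod_{j=1}^m \E[h_j(q_j + Z_j)]$ for any polynomials $h_1,\dots,h_m$. By Theorem~\ref{thm:unbiased-uni-poly}, for each $j$ there is a closed-form polynomial $g_j$ of degree $n_j$ (the noise $Z_j$ has all moments up to $n_j$ finite and known) with $\E[g_j(q_j + Z_j)] = q_j^{n_j}$. Taking $g(\tilde q_1,\dots,\tilde q_m) = \prod_{j=1}^m g_j(\tilde q_j)$ then gives $\E[g(\tilde q_1,\dots,\tilde q_m)] = \prod_j q_j^{n_j}$, as desired. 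Summing these monomial estimators with the coefficients of $f$ yields the claimed $g$, which is manifestly a polynomial and available in closed form, since each $g_j$ is obtained by solving the triangular linear system $Ma = b$ from the proof of Theorem~\ref{thm:unbiased-uni-poly}.

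I would then note the bookkeeping point that the total degree of the resulting estimator in each variable $\tilde q_j$ is at most the degree of $f$ in $q_j$, so only finitely many moments of each $Z_j$ are used — consistent with the hypothesis that each $Z_j$ has finite, publicly known moments (one needs moments up to the degree of $f$ in $q_j$, which is finite). It is worth stating explicitly that this is where independence is essential: without it, $\E[\prod_j g_j(\tilde q_j)]$ would involve cross-moments of the $Z_j$, and the factorization — hence the clean reduction — would fail.

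The main obstacle is essentially presentational rather than mathematical: there is no deep difficulty once the two observations (linearity in $f$, factorization under independence) are in place. The one thing requiring a little care is the indexing of the multivariate monomial basis and verifying that the construction is well-defined and finite; I would handle this with multi-index notation, writing $f(q) = \sum_{\alpha} b_\alpha q^\alpha$ over a finite set of multi-indices $\alpha = (n_1,\dots,n_m)$ and defining $g(\tilde q) = \sum_\alpha b_\alpha \prod_{j=1}^m g_{j,\alpha}(\tilde q_j)$ where $g_{j,\alpha}$ is the univariate unbiased estimator of $q_j^{\alpha_j}$ from Theorem~\ref{thm:unbiased-uni-poly}.
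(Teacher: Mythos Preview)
Your proposal is correct and follows essentially the same approach as the paper: reduce to monomials by linearity, apply the univariate Theorem~\ref{thm:unbiased-uni-poly} coordinatewise, and use independence of the $Z_j$ to factor the expectation of the product estimator. The paper's proof is just a terser version of exactly this argument.
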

\begin{proof}
Clearly, it suffices to derive unbiased estimators for $f(q_1,\dots, 
q_m)=\prod_{i=1}^m q_i^{p_i}$.
Let $g_i(\tilde q_i)$ be the unbiased estimator of $q_i^{p_i}$ as in Theorem \ref{thm:unbiased-uni-poly} and set $g(\tilde q_1,\dots,\tilde q_m) = \prod_{i=1}^m g_i(\tilde q_i)$.
Since $g_1(\tilde q_1),\dots,g_m(\tilde q_m)$ are independent random variables, we have
\begin{align}
    \E[g(\tilde q_1,\dots \tilde q_m)]= \E\left[\prod_{i=1}^m g_i(\tilde q_i)\right] = \prod_{i=1}^m \E\left[g_i(\tilde q_i)\right] = \prod_{i=1}^m q_i^{p_i} = f(q_1,\dots,q_m).
\end{align}
\end{proof}

\section{Conclusions and Future Work}
In this work, we have shown how to compute unbiased estimators of twice-differentiable tempered distributions evaluated on privatized statistics with added Laplace noise. In addition, we have proposed a method to extend this result to twice-differentiable functions which are not tempered distributions in a way that achieves approximately optimal expected squared error.

As the Laplace distribution is a commonly used and simple DP mechanism, these results are widely applicable to obtain unbiased statistics for free in postprocessing, which is particularly valuable due to the fact that aggregating unbiased statistics accumulates error more slowly than aggregating biased statistics. We have applied our results to derive a competitive unbiased algorithm for means and to derive unbiased transformation mechanisms for per-record DP mechanisms that enjoy stronger privacy protection than do analogs in previous work. Finally, we have derived an unbiased estimator for polynomials under arbitrary noise distributions with known moments, such as the discrete Gaussian mechanism or the staircase mechanism \cite{CanonneEtAl2020DGM, staircase}.

We believe this paper opens several avenues for future research. These include the use of the deconvolution method to obtain unbiased estimators for other estimands and noise distributions. We believe a deconvolution method using multivariate Fourier transforms could also be used to obtain unbiased estimators of functions of multivariate queries. Although we did not attempt to optimize the numerical implementation in Section~\ref{sec:Numerical_Means} of the integration in Section~\ref{sec:unbiasLaplaceNonTempDist}, we believe that an improved implementation could enable the practical use of higher-order polynomial extensions and further reduce error. In Section~\ref{sec:unbiasPoly}, we developed estimators that are exactly unbiased for polynomials that could approximate other functions of interest. Further work could elaborate on this process, developing concrete procedures for picking the approximating polynomial and deriving bounds on the resulting bias. Finally, future work could attempt to derive noise distributions that are optimal in the sense of minimizing the variances (or other utility metrics) of their unbiased estimators.

\appendix

\section{Distribution Theory Background and Proofs of Previously Known Results \label{app:Prelim}}

\subsection{Proof of the Convolution Theorem}
Recall the convolution theorem:
\begin{theorem}
    (\cite{vanDijk+2013} section 7.1 property c)
    $$\widehat{(f*g)}(y) = \hat{f}(y) \cdot \hat{g}(y)$$
\end{theorem}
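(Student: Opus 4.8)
The plan is to prove the identity first for absolutely integrable $f$ and $g$, where the Fourier transforms are given directly by Definition~\ref{def:FourierAbsInt}, and then to remark briefly on how the identity propagates to the tempered-distribution setting actually used in Section~\ref{sec:unbiasLaplace}. First I would unfold both the convolution and the outer Fourier transform on the left-hand side:
\[
\widehat{(f*g)}(y) = \int_{-\infty}^\infty e^{-2\pi i y x}\left(\int_{-\infty}^\infty f(z)\, g(x-z)\, dz\right) dx .
\]

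The heart of the argument is to interchange the two integrals, and I expect this to be the only real obstacle. I would justify it by the Fubini--Tonelli theorem, running the Tonelli half on absolute values: since $|e^{-2\pi i y x}| = 1$ and Lebesgue measure is translation invariant,
\[
\int_{-\infty}^\infty \int_{-\infty}^\infty |f(z)|\,|g(x-z)|\, dz\, dx = \left(\int_{-\infty}^\infty |f(z)|\, dz\right)\left(\int_{-\infty}^\infty |g(u)|\, du\right) = \|f\|_1 \|g\|_1 < \infty .
\]
Hence the double integral converges absolutely, the integrand is jointly measurable, and the order of integration may be swapped.

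Having swapped, I would fix $z$, substitute $u = x - z$ in the inner integral, and use $e^{-2\pi i y x} = e^{-2\pi i y z}\, e^{-2\pi i y u}$ to factor the exponential, after which the $z$- and $u$-integrals separate:
\[
\widehat{(f*g)}(y) = \int_{-\infty}^\infty f(z)\, e^{-2\pi i y z}\left(\int_{-\infty}^\infty e^{-2\pi i y u} g(u)\, du\right) dz = \hat g(y)\int_{-\infty}^\infty f(z)\, e^{-2\pi i y z}\, dz = \hat f(y)\, \hat g(y).
\]
Everything after the interchange is a routine change of variables plus pulling a factor that does not depend on the variable of integration out of the integral.

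Finally, I would add a short remark that in every use of Theorem~\ref{thm:convolution} in this paper one of the two factors is a Schwartz (or at least integrable and rapidly decaying) function while the other may be a tempered distribution; in that case the identity follows from the $L^1$ computation above together with the defining property of the Fourier transform of a tempered distribution (Definition~\ref{def:DistFourier}) by testing against Schwartz functions, and the general statement is exactly the one recorded in \cite{vanDijk+2013}, section 7.1 property c.
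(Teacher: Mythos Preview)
Your proposal is correct and follows essentially the same approach as the paper: both arguments unfold the definitions, invoke Fubini's theorem to interchange the order of integration, and then separate the resulting double integral via the change of variables $u = x - z$ and the factorization of the exponential. Your presentation is slightly more streamlined (working directly from the left-hand side to the right, rather than computing both sides and matching) and your justification of the Fubini step via the Tonelli half on absolute values is more explicit than the paper's, but the underlying idea is identical.
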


\begin{proof}
    We start by writing out the definition of the LHS:
    \begin{align}
        \widehat{(f*g)}(y) &= \int_{-\infty}^\infty e^{-2\pi i x y} \int_{-\infty}^\infty f(z) g(x-z) dz dx\\
        &= \int_{-\infty}^\infty dz \int_{-\infty}^\infty e^{-2\pi i xy}f(z) g(x-z) dx. \quad(\text{Fubini's theorem})
    \end{align}
    Here, Fubini's theorem \cite{Fubini} allows us to flip the order of integration if the integral is finite, the measures are $\sigma$-finite, and the integrand is measurable.

    And then for the RHS:
    \begin{align}
        \hat{f}(y) \cdot \hat{g}(y) &= \int_{-\infty}^\infty e^{-2\pi i xy} f(x) dx \cdot \int_{-\infty}^\infty e^{-2\pi i zy} g(z) dz\\
        &= \int_{-\infty}^\infty \int_{-\infty}^\infty e^{-2\pi i xy} f(x)e^{-2\pi i zy} g(z) dx dz\\
        &= \int_{-\infty}^\infty \int_{-\infty}^\infty e^{-2\pi i (x+z)y} f(x)g(z) dx dz\\
        &=\int_{-\infty}^\infty \int_{-\infty}^\infty e^{-2\pi i uy} f(x) g(u-x) dx dz.
    \end{align}
    We get the iterated integral in the 2nd line by noting that the two integrals are constant with respect to each other. Then, note that if we simply change the variables $x \rightarrow z$ and $u \rightarrow x$, we get back exactly the same expression as the LHS. Thus,
\begin{align}
    \widehat{(f*g)}(y) = \hat{f}(y) \cdot \hat{g}(y).
\end{align}
\end{proof}

\subsection{Proof of the Derivative Property}

Recall the Derivative Property:
\begin{theorem} \label{thm:funcFourierDerivAppendix} (\cite{vanDijk+2013} Theorem 7.6)
For absolutely integrable functions $f$ that are $k$ times continuously differentiable and whose derivatives of order ${1,...,k}$ are absolutely integrable, $F\left[\frac{d^k}{dx^k} f(x)\right](y) = (2 \pi i y)^k \hat{f}(x)$.
\end{theorem}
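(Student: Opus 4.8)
The plan is to prove the $k=1$ case directly via integration by parts and then induct. For $k=1$, I would write $F[f'(x)](y) = \int_{-\infty}^\infty e^{-2\pi i yx} f'(x)\, dx$ and integrate by parts, with $u = e^{-2\pi i yx}$ and $dv = f'(x)\,dx$, obtaining
\begin{align}
F[f'(x)](y) = \Bigl[ e^{-2\pi i yx} f(x) \Bigr]_{-\infty}^{\infty} + 2\pi i y \int_{-\infty}^\infty e^{-2\pi i yx} f(x)\, dx = \Bigl[ e^{-2\pi i yx} f(x) \Bigr]_{-\infty}^{\infty} + (2\pi i y)\hat f(y).
\end{align}
So everything reduces to showing the boundary term vanishes.

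The main obstacle — and really the only substantive step — is showing $f(x) \to 0$ as $x \to \pm\infty$. I would argue this as follows: since $f$ is continuously differentiable, $f(x) = f(0) + \int_0^x f'(t)\, dt$; because $f'$ is absolutely integrable, the improper integral $\int_0^\infty f'(t)\, dt$ converges, so $\lim_{x \to \infty} f(x)$ exists (and similarly as $x \to -\infty$). If this limit were some nonzero constant $c$, then $|f|$ would be bounded below by $|c|/2$ on a neighborhood of $+\infty$, contradicting the absolute integrability of $f$. Hence the limit is $0$, and since $|e^{-2\pi i yx}| = 1$, the boundary term is $0$. This gives the $k=1$ statement.

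For general $k$, I would proceed by induction. The hypotheses on $f$ (namely, $f$ is $k$ times continuously differentiable and $f, f', \dots, f^{(k)}$ are all absolutely integrable) ensure that each $f^{(j)}$ for $j = 0, \dots, k-1$ satisfies the hypotheses of the $k=1$ case: it is continuously differentiable, absolutely integrable, and has absolutely integrable derivative $f^{(j+1)}$. Thus $F[f^{(j+1)}](y) = (2\pi i y) F[f^{(j)}](y)$ for each such $j$. Chaining these $k$ identities together yields
\begin{align}
F\!\left[\frac{d^k}{dx^k} f(x)\right](y) = (2\pi i y)^k \hat f(y),
\end{align}
which is the claim. (The statement in the excerpt writes $\hat f(x)$ on the right; this is a typo for $\hat f(y)$, and I would note this.) No step beyond the boundary-term argument requires care — the integration by parts and the induction are routine once the decay of $f^{(j)}$ at infinity is in hand.
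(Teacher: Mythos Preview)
Your proposal is correct and follows essentially the same approach as the paper: integration by parts for the $k=1$ case, then iterate. The paper's proof is terser and simply asserts that the boundary term $e^{-2\pi i xy} f(x)\big|_{-\infty}^\infty$ equals $0$ without justification, whereas you supply the missing argument (using integrability of $f$ and $f'$ to deduce $f(x)\to 0$ at $\pm\infty$); your version is thus strictly more complete, and your observation about the $\hat f(x)$ typo is also correct.
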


\begin{proof}
We simply use integration by parts:
    \begin{align*}
        F[f'(x)](y) &= \int_{-\infty}^\infty e^{-2\pi i xy} f'(x) dx\\
        &= e^{-2\pi i xy} f(x)|_{-\infty}^\infty + \int_{-\infty}^\infty 2 \pi i y e^{-2\pi i xy} f(x) dx\\
        &= 0 + 2\pi i y \int_{-\infty}^\infty e^{-2\pi i xy} f(x) dx\\
        &= 2 \pi i y \hat{f}(y).
    \end{align*}

    Repeating this procedure $k$ times gives the theorem.
\end{proof}

\subsection{Distribution Theory}
\label{app:DistTheory}

In order to generalize the Fourier transform to functions that are not absolutely integrable, we must introduce some concepts from distribution theory \cite{vanDijk+2013}:

\begin{definition} \label{def:testFunc}
    \item A test function is a function $\varphi:\RR\rightarrow \C$ which is in $C^\infty$ (that is, it is infinitely differentiable) with bounded support.
\end{definition}

\begin{example}
    One example of a test function is $f(x) = 1$ if $x \in [0,1]$ and $0$ elsewhere. 
\end{example}

Now, we can define distributions, which are a generalization of functions in the sense that we can construct a distribution corresponding to any function that acts like a function, as described below. There are also distributions that are not functions, an example of which will be described after we provide the definition:

\begin{definition} \label{def:Distribution}
    \item A distribution is a linear functional (that is, a linear mapping from functions to $\RR$) that acts on test functions as follows.

    For a distribution $T$, we have:

    \begin{enumerate}
        \item $T(\varphi_1) + T(\varphi_2) = T(\varphi_1 + \varphi_2)$ for all test functions $\varphi_1, \varphi_2$.
        \item $\lambda T(\varphi) = T(\lambda \varphi)$ for all test functions $\varphi$ and scalars $\lambda$.
        \item If a sequence $\lim_{n \rightarrow \infty} \varphi_n = \varphi$, then
        $$\lim_{n \rightarrow \infty} T(\varphi_i) = T(\varphi).$$
    \end{enumerate}

\end{definition}

In distribution theory, we typically represent $T(\varphi)$ as $\langle T, \varphi\rangle$, and for any locally integrable function $f$ (that is, its integral converges on any bounded domain), there exists a distribution $T_f$ where

$$\langle T_f, \varphi\rangle = \int_{-\infty}^\infty f(x) \varphi(x) dx.$$

Distributions generalize functions in the sense that they capture the behavior of functions in the above integral but include other linear functionals that cannot be expressed via such an integral.

\begin{example}
\label{ex:dirac}
    The Dirac delta distribution $\delta$ is a distribution where $\langle \delta, f\rangle = f(0)$ for all functions $f$, but there is no function that satisfies this property.
\end{example}

Recall that we required to define the Schwartz space and tempered distributions in order to extend Fourier transforms to functions that are not absolutely integrable. Tempered distributions are a special case of distributions that have to act on all Schwartz functions, which are a superset of test functions.

To see this, recall the definition of the Schwartz Space:

\begin{definition}
    \item The Schwartz space $S(\RR)$ is defined as follows:
    $$S(\RR) = \{s:\RR \rightarrow \C \mid s \in C^\infty, \sup_{x \in \RR} |x^m s^{(n)}(x)| < \infty \quad \forall m,n \ge 0\}.$$
    That is, functions in $S(\RR)$ are infinitely differentiable everywhere and all of their derivatives go to 0 at a super-polynomial rate.
\end{definition}

A function with tails that are equal to 0 satisfies this property as, if $s^{(n)}(x) = 0$ for $x < L$ or $x > U$ for some lower bound $L$ and upper bound $U$, and $s$ is well-defined, then both $s^{(n)}(x)$ and $x^m$ must be finite for any $x$ such that $x^ms^{(n)}(x) \ne 0$. Thus, we have $|x^m s^{(n)}(x)| < \infty \: \forall m,n \ge 0$.

Then, we have the following definition of tempered distributions:

\begin{definition} \label{def:tmprDistAppendix}
    The space of tempered distributions $S'(\RR)$ is the set of distributions $T$ such that $\langle T, s\rangle \in \C$ for all $s \in S(\RR)$. Additionally, for any function $f$ we define
    $$\langle T_f, s\rangle = \int_{-\infty}^\infty f(x) s(x)dx.$$
\end{definition}

Note that $T_f$ is a distribution corresponding to the function $f$, but is itself not a function. This means it acts on functions but not numbers. However, we need to be able to use it as a function when we return it as our unbiased estimator so we simply define $T_f(q) = f(q) \quad \forall q \in \RR$. It is similar in this sense that we say a function $f$ is a tempered distribution if $T_f \in S'(\RR)$, even though $f$ is technically not a distribution as it acts on the wrong objects.

Like before, we can define the Fourier transform of tempered distributions as follows:

\begin{definition} \label{def:DistFourierAppendix}
    $\hat{T}$ is the distribution such that for all Schwartz functions $\varphi$,
    $$\langle \hat{T}, \varphi\rangle = \langle T, \hat{\varphi}\rangle.$$
\end{definition}

Note that, since distributions are not necessarily functions, their derivatives cannot be defined in the traditional sense. Instead, we consider the \textit{distributional derivative} $T'$, defined as the distribution such that $\langle T', \varphi\rangle = -\langle T, \varphi'\rangle$. This is similar to the usual notion of a function's derivative in that it preserves integration by parts:
\begin{align} \label{eq:distDeriv}
    \langle T', \varphi\rangle = \int_{-\infty}^\infty T'(x) \varphi(x) dx = T(x)\varphi(x)|_{-\infty}^\infty - \int_{-\infty}^\infty T(x) \varphi'(x) = -\langle T, \varphi'\rangle,
\end{align}
where $T(x)\varphi(x)|_{-\infty}^\infty = 0$ because $\varphi(x) = 0$ at $\pm \infty$.

The following analogue of Theorem~\ref{thm:funcFourierDerivAppendix} extends that derivative property of the Fourier transform to tempered distributions.

\begin{corollary}
\label{cor:derivative} (\cite{vanDijk+2013} section 7.8 Example 5)
For all $T \in S'(\RR)$,
    $$\widehat{T^{(n)}} = (2\pi i y)^n \hat{T}.$$
\end{corollary}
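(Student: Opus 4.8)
The plan is to reduce to the case $n=1$ by induction and prove that case directly from the definitions, using the standard facts that the Fourier transform maps $S(\RR)$ to $S(\RR)$ and that distributional differentiation and multiplication by a polynomial each send tempered distributions to tempered distributions (being transposes of continuous operations on $S(\RR)$).

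For the base case $n=1$, fix an arbitrary $\varphi \in S(\RR)$. By Definition~\ref{def:DistFourierAppendix} followed by the definition of the distributional derivative (Equation~\ref{eq:distDeriv}), $\langle \widehat{T'}, \varphi \rangle = \langle T', \widehat{\varphi} \rangle = -\langle T, (\widehat{\varphi})' \rangle$. The key observation is that $(\widehat{\varphi})'$ is again the Fourier transform of a Schwartz function: differentiating under the integral sign in $\widehat{\varphi}(y) = \int_{-\infty}^{\infty} e^{-2\pi i x y}\varphi(x)\,dx$ (legitimate because $\varphi$ and $x\varphi(x)$ decay rapidly) gives $(\widehat{\varphi})' = \widehat{\psi}$ with $\psi(x) = -2\pi i x\,\varphi(x) \in S(\RR)$; this is the multiplication-by-$x$ counterpart of Theorem~\ref{thm:funcFourierDerivAppendix}. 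Substituting and unwinding Definition~\ref{def:DistFourierAppendix} once more, $-\langle T, \widehat{\psi} \rangle = -\langle \widehat{T}, \psi \rangle = \langle \widehat{T},\, 2\pi i x\,\varphi(x) \rangle = \langle 2\pi i y\,\widehat{T},\, \varphi \rangle$, where the final equality is the definition of multiplying the tempered distribution $\widehat{T}$ by the polynomial $y \mapsto 2\pi i y$. Since $\varphi$ was arbitrary, $\widehat{T'} = 2\pi i y\,\widehat{T}$.

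The inductive step is then immediate: if $\widehat{T^{(n-1)}} = (2\pi i y)^{n-1}\widehat{T}$, then applying the base case to the tempered distribution $T^{(n-1)}$ gives $\widehat{T^{(n)}} = \widehat{(T^{(n-1)})'} = 2\pi i y\,\widehat{T^{(n-1)}} = (2\pi i y)^n \widehat{T}$, with the case $n=0$ being trivial.

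I expect the only steps requiring care to be bookkeeping: checking that $\psi(x) = -2\pi i x\,\varphi(x)$ indeed lies in $S(\RR)$, so that the pairings $\langle \widehat{T}, \psi\rangle$ and $\langle T, \widehat{\psi}\rangle$ are defined and related by Definition~\ref{def:DistFourierAppendix}, and confirming that distributional differentiation and polynomial multiplication preserve membership in $S'(\RR)$. These are routine properties of $S(\RR)$ and $S'(\RR)$, so no genuinely hard obstacle arises; the corollary is essentially the transpose of Theorem~\ref{thm:funcFourierDerivAppendix} (together with its multiplication-by-$x$ dual) carried over to tempered distributions.
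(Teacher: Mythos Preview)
Your proof is correct and follows essentially the same route as the paper: both pair $\widehat{T'}$ against a test function, apply the distributional derivative and the definition of the Fourier transform on $S'(\RR)$, and identify $(\widehat{\varphi})'$ with $\widehat{-2\pi i x\,\varphi(x)}$ to arrive at $\langle 2\pi i y\,\widehat{T}, \varphi\rangle$. The paper only writes out the case $n=1$ and leaves the extension to general $n$ implicit, whereas you make the induction explicit; otherwise the arguments coincide.
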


\begin{proof}
Note that for $f \in S'(\RR)$ and $\varphi \in S(\RR)$, we have
\begin{align}
    \langle \hat{f'}, \varphi\rangle = \langle f', \hat{\varphi}\rangle = -\langle f, \left(\hat{\varphi}\right)'\rangle.
\end{align}

By the dual of the derivative property on Schwartz functions, we have $\widehat{x \varphi(x)} = \frac{i}{2\pi}\left(\hat{\varphi}\right)'$, or
\begin{align}
    \widehat{-2\pi i x \varphi(x)} = \left(\hat{\varphi}\right)'.
\end{align}

By linearity of the inner product and definition of the Fourier transform on tempered distributions, we thus have
\begin{align}
    -\langle f, \left(\hat{\varphi}\right)'\rangle = \langle f, \widehat{2\pi ix \varphi(x)}\rangle = \langle \hat{f}, 2\pi i x \varphi(x)\rangle = \langle 2\pi i x \hat{f}(x), \varphi \rangle.
\end{align}

\end{proof}

With these, we can generalize Theorem~\ref{thm:unbiased} to all tempered distributions instead of just the functions that correspond to them.


\section{Proof for Section 3} \label{App: section3}
\begin{theorem}
(Restatement of Theorem~\ref{thm:unbiased})
    \item Let $\tilde{q} \sim q + \Lap(0,b)$. 
    \begin{itemize}
         \item For any twice-differentiable function $f:\R \to \R$ that is a tempered distribution, $f(\tilde{q}) - b^2 f''(\tilde{q})$ is an unbiased estimator for $f(q)$.
        \item For any function $f:\R \to \R$, if two estimators $g_1(\tilde{q})$ and $g_2(\tilde{q})$ are unbiased for $f(q)$, then $g_1$ and $g_2$ are equal almost everywhere.
    \end{itemize}
\end{theorem}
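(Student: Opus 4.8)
The plan is to run the deconvolution argument from Section~\ref{sec:prelim} in the space of tempered distributions. For the first bullet, start from the $\Lap(0,b)$ density $r(x)=\frac{1}{2b}e^{-|x|/b}$ and compute (or cite) its Fourier transform $\hat r(y)=\frac{1}{1+(2\pi b y)^2}$, which is nowhere zero. Since $f$ is a tempered distribution, so is its second derivative (the distributional derivative agreeing with the classical one because $f$ is twice differentiable), hence so is $g:=f-b^2 f''$. By the derivative property (Theorem~\ref{thm:tmpDistFourierDeriv}), $\widehat{f''}(y)=(2\pi i y)^2\hat f(y)=-(2\pi y)^2\hat f(y)$, so
$$\hat g(y)=\hat f(y)-b^2\widehat{f''}(y)=\hat f(y)\bigl(1+(2\pi b y)^2\bigr)=\frac{\hat f(y)}{\hat r(y)}.$$
Multiplying by $\hat r$ and invoking the convolution theorem (Theorem~\ref{thm:convolution}) gives $\widehat{g*r}=\hat g\,\hat r=\hat f$, so $g*r=f$ as tempered distributions by injectivity of the Fourier transform. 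Since $r$ is symmetric, $\E[g(\tilde q)]=\int_{-\infty}^\infty g(x)\,r(x-q)\,dx=(g*r)(q)$, and because $f$ is continuous while $g*r$ (a convolution against the bounded $L^1$ density $r$) is continuous, the almost-everywhere identity of distributions upgrades to $\E[g(\tilde q)]=f(q)$ for every $q$.

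For the second bullet, let $g_1(\tilde q)$ and $g_2(\tilde q)$ both be unbiased for $f(q)$ and set $h:=g_1-g_2$, so $(h*r)(q)=0$ for all $q$. Existence of the expectations forces $\int|g_i(x)|e^{-|x-q|/b}\,dx<\infty$ for all $q$, hence $h\in L^1_{\mathrm{loc}}$ and $\int_q^\infty|h(x)|e^{-x/b}\,dx<\infty$, $\int_{-\infty}^q|h(x)|e^{x/b}\,dx<\infty$ for every $q$. Splitting $\int h(x)\tfrac{1}{2b}e^{-|x-q|/b}\,dx=0$ at $x=q$ and multiplying by $2b\,e^{q/b}$ gives, with $A(q)=\int_{-\infty}^q h(x)e^{x/b}\,dx$ and $B(q)=\int_q^\infty h(x)e^{-x/b}\,dx$, the identity $A(q)+e^{2q/b}B(q)=0$. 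Differentiating in $q$ and using $A'(q)=h(q)e^{q/b}$, $B'(q)=-h(q)e^{-q/b}$ (valid a.e.\ since $h\in L^1_{\mathrm{loc}}$), the $h(q)e^{q/b}$ terms cancel and leave $\tfrac{2}{b}e^{2q/b}B(q)=0$; hence $B\equiv 0$, so $B'\equiv 0$, so $h(q)e^{-q/b}=0$ a.e., i.e. $g_1=g_2$ almost everywhere. (One could instead argue that $\widehat{h*r}=\hat h\,\hat r=0$ with $\hat r$ nowhere vanishing forces $\hat h=0$ and hence $h=0$; the elementary argument above is preferable because the statement does not assume $g_1,g_2$ are tempered distributions.)

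The step I expect to be the main obstacle is not the algebra but making the tempered-distribution manipulations in the first bullet fully rigorous: one must justify that $\E[g(\tilde q)]$ equals the convolution $(g*r)(q)$ (i.e.\ that the defining integral converges and represents the convolution), and that the convolution theorem is applicable when one factor is a general tempered distribution $g$ and the other is the non-Schwartz density $r$. A continuous tempered distribution need not have polynomial pointwise growth, so the proof in Appendix~\ref{App: section3} must either pair both sides of $g*r=f$ against an arbitrary Schwartz function and transport the Fourier identity through that pairing, or exploit the growth control implicit in ``twice-differentiable tempered distribution'' to make all interchanges legal by dominated convergence. The uniqueness half, by contrast, is self-contained once the convergence of $A$ and $B$ is noted.
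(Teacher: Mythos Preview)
Your argument for the first bullet is essentially the paper's proof: compute $\hat r(y)=1/(1+4\pi^2 b^2 y^2)$, use the derivative property to identify $\hat g=\hat f/\hat r$, and invert. The paper is actually less careful than you are about the tempered-distribution side conditions you flag at the end; it simply runs the formal deconvolution and cites Theorems~\ref{thm:convolution} and~\ref{thm:tmpDistFourierDeriv}.

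For the uniqueness bullet your approach is genuinely different. The paper does not argue directly: it cites the known fact (Oosterhoff--Schriever) that the Laplace location family is \emph{complete}, and then runs the standard Lehmann--Scheff\'e step that $\E[g_1(\tilde q)-g_2(\tilde q)\mid q]=0$ for all $q$ forces $g_1-g_2=0$ almost everywhere. Your split-and-differentiate argument with $A(q)+e^{2q/b}B(q)=0$ is in effect a self-contained proof of that completeness result for the Laplace family; it is correct (the absolute continuity of $A$ and $B$ is exactly what justifies differentiating the identity a.e.), and it has the advantage of being elementary and not requiring an external reference or any tempered-distribution hypothesis on the $g_i$. The paper's route is shorter on the page and situates the result in classical statistical theory, but yours makes the mechanism transparent and would let a reader reproduce the uniqueness claim without chasing citations.
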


\begin{proof}
    We start by proving the first item in Theorem~\ref{thm:unbiased}. From the table of Fourier transforms in \cite{Kammler_2008}, we know that
\begin{align}
    F\left[e^{-a|x|}\right](y) = \frac{2a}{4\pi^2 y^2 + a^2}
\end{align}

and the PDF of $\Lap(0,b)$ is $\frac{1}{2b}e^{-|x|/b}$, so
\begin{align}
    F\left[\frac{1}{2b} e^{-|x|/b}\right](y) = \frac{1}{2b} \cdot \frac{2b}{4\pi^2 y^2 b^2 + 1} = \frac{1}{1+ 4\pi^2 y^2 b^2}.
\end{align}

Recall that the unbiased estimator for a function $f$ would then be the function that satisfies
\begin{align}
    f(q) = \E[g(\tilde{q})] = \int_{-\infty}^\infty g(\tilde{q}) P(\tilde{q} - q) d\tilde{q} = (f * P)(q),
\end{align}

where $P$ is the PDF of $\Lap(0,b)$.

Thus, by Theorem~\ref{thm:convolution}, we can take the Fourier transform of both sides and get
\begin{align}\hat{f}(y) &= \hat{g}(y) \cdot \hat{P}(y)\\
\hat{g}(y) &= \frac{\hat{f}(y)}{\hat{P}(y)}\\
g(\tilde{q}) &= F^{-1} \left[\frac{\hat{f}(y)}{\hat{P}(y)}\right](\tilde{q})\\
&= F^{-1} \left[\hat{f}(y)(1+4\pi^2y^2b^2)\right](\tilde{q}).
\end{align}
Now, note that applying Theorem~\ref{thm:tmpDistFourierDeriv} gives us $4\pi^2y^2b^2\hat{f} = -b^2\widehat{f''}$. Combining this with the linearity of the Fourier transform, the inverse Fourier transform gives us:
\begin{align}
    g(\tilde{q}) = f(\tilde{q}) - b^2 f''(\tilde{q}).
\end{align}

We now prove the second item in Theorem~\ref{thm:unbiased}. Here, ``distribution'' will refer to probability distributions, rather than generalized functions. Page 188 of \cite{OosterhoffEtAl1987Complete} shows that the location family of Laplace distributions is \textit{complete}, meaning that a function $h$ must be equal to 0 almost everywhere in order to satisfy 
\begin{align}
    \E[h(\tilde{q})|q] = \int_{-\infty}^{\infty} h(x) \frac{1}{2b}e^{-\frac{|x-q|}{b}} dx = 0
\end{align}
for all $q \in \R$. Intuitively, this means that the only function unbiased for the zero function is the zero function.

By a variation on the Lehmann-Scheff\'e theorem \cite{LehmannScheffe1950}, this implies our uniqueness result. The proof, based on \cite{Mackey2015LectNotes}, is straightforward: Suppose that the two estimators $g_1$ and $g_2$ satisfy $\E[g_1(\tilde{q})|q] = \E[g_2(\tilde{q})|q] = f(q)$ for all $q \in \R$. It follows that $
    \E[g_1(\tilde{q}) - g_2(\tilde{q})|q] = 0.$
Completeness then implies that $g_1(x) - g_2(x) = 0$ for $x$ almost everywhere, giving us the result.

For the reader trying to prove similar uniqueness results when other noise distributions are used, we note that \cite{Mattner1992Complete} provides sufficient conditions for completeness of location families of other distributions. Completeness is not a given, though; \cite{Mattner1993Incomplete} shows that, except for the Gaussian distribution, distributions with tails thinner than a Laplace distribution typically have incomplete location families.
\end{proof}

\section{Proofs for Section 4} \label{App:section4}

The following lemma is required for the proof of Theorem~\ref{thm:polyApprox}.

\begin{lemma}[Constrained Polynomial Approximation] \label{lem:finalApproxLemma}
        Let $L \in \R$ and $c > 0$. Let $f:[L,\infty) \to \R$ be twice differentiable and denote the weight function $W_{L,c}(x) \equiv
        e^{-c|x-L|}$. Let $h:(-\infty,L] \to \R$ be a twice continuously differentiable tempered distribution satisfying $h(L) = f(L)$, $h'(L) = f'(L)$ and $h''(L) = f''(L)$. There exists a sequence of polynomials $(p_K)_{K=1}^\infty$ that satisfy $p_K(L) = f(L)$, $p_K'(L) = f'(L)$ and $p_K''(L) = f''(L)$ such that the following limits hold for $p \in [1,\infty) \cup \{\infty\}$:

        \begin{align}
        \lim_{K \to \infty} ||(p_K - h) W_{L,c}||_{L_p((-\infty,L))} = \lim_{K \to \infty}||(p''_K(x) -h'')W_{L,c}||_{L_p((-\infty,L))} = 0
        \end{align}
    \end{lemma}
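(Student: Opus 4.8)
The plan is to reduce the constrained approximation problem to an unconstrained one, approximate in the unconstrained setting using a standard weighted-polynomial density result, and then repair the boundary constraints with a negligible correction. First I would substitute $u = L - x$ so that the domain $(-\infty, L]$ becomes $[0,\infty)$, the weight $W_{L,c}(x) = e^{-c|x-L|}$ becomes $e^{-cu}$, and the constraints at $x = L$ become constraints at $u = 0$ on the value, first derivative, and second derivative (up to signs). So it suffices to show: given $\phi \equiv h \circ (L - \cdot)$, a twice continuously differentiable tempered distribution on $[0,\infty)$ with prescribed $\phi(0), \phi'(0), \phi''(0)$, there is a sequence of polynomials $p_K$ matching those three values at $0$ with $\|(p_K - \phi)e^{-cu}\|_{L_p} \to 0$ and $\|(p_K'' - \phi'')e^{-cu}\|_{L_p} \to 0$, simultaneously for all $p \in [1,\infty]$.

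Next I would handle the second-derivative convergence first, since it is the binding requirement and the value/first-derivative convergence will follow by integration. Since $\phi$ is a tempered distribution that is twice continuously differentiable, $\phi''$ is continuous and grows at most polynomially, so $\phi'' e^{-cu}$ lies in (a weighted analog of) $C_0[0,\infty)$ and in every $L_p$. The weighted Weierstrass / Bernstein-type density theorems (for the weight $e^{-cu}$ on $[0,\infty)$, which is a classical admissible weight) give a sequence of polynomials $r_K$ with $\|(r_K - \phi'')e^{-cu}\|_{L_p} \to 0$; to get this uniformly in $p$ I would first get it in $L_\infty$ (using that $\phi'' e^{-cu} \to 0$ at infinity, so it is uniformly approximable on the whole half-line), and then note that since $e^{-cu}$ is integrable and the error is also uniformly bounded, $L_\infty$ convergence of $(r_K-\phi'')e^{-cu}$ together with a uniform growth bound upgrades to $L_p$ convergence for every finite $p$ by interpolation/dominated convergence. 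Then I would define a candidate primitive $P_K(u) \equiv \phi(0) + \phi'(0)u + \int_0^u \int_0^t r_K(s)\,ds\,dt$, which is a polynomial automatically satisfying $P_K(0) = \phi(0)$, $P_K'(0) = \phi'(0)$, and $P_K''(0) = r_K(0)$. Writing $\phi(u) - P_K(u) = \int_0^u\int_0^t (\phi''(s) - r_K(s))\,ds\,dt$ and bounding the double integral pointwise, the $e^{-cu}$ weight kills the polynomial-in-$u$ blow-up of the iterated integral, so $\|(P_K - \phi)e^{-cu}\|_{L_p}\to 0$ as well; similarly for $P_K' - \phi'$, though that is not needed.

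Finally I would fix the remaining mismatch $P_K''(0) = r_K(0) \ne \phi''(0)$: the defect $d_K \equiv \phi''(0) - r_K(0)$ tends to $0$ (since $\|(r_K - \phi'')e^{-cu}\|_{L_\infty}\to 0$ forces $r_K(0)\to\phi''(0)$), so I would add a fixed low-degree correction $q_K(u) \equiv \tfrac{1}{2} d_K\, u^2 e^{0}$ — wait, that is not a polynomial times the right thing; instead simply take $p_K(u) \equiv P_K(u) + \tfrac12 d_K u^2$, a genuine polynomial, which restores $p_K''(0) = \phi''(0)$ while leaving $p_K(0),p_K'(0)$ unchanged, and whose contribution $\tfrac12 d_K u^2$ satisfies $\|\tfrac12 d_K u^2 e^{-cu}\|_{L_p} = |d_K|\cdot O(1) \to 0$ in every $L_p$ (and likewise its second derivative $d_K$ contributes $\|d_K e^{-cu}\|_{L_p} = |d_K|\cdot O(1)\to 0$). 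Undoing the substitution $u = L-x$ gives the claimed $p_K$ on $(-\infty,L]$. The main obstacle I expect is being careful that the weighted polynomial density result I invoke really does hold in $L_\infty$ on the unbounded half-line for the weight $e^{-c|x-L|}$ (this needs the function being approximated to decay, which is why working with $\phi'' e^{-cu}$ rather than $\phi''$ is essential) and that $L_\infty$-type convergence transfers uniformly to all $L_p$; everything after that is elementary estimation with the exponential weight absorbing polynomial factors.
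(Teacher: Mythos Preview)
Your overall strategy mirrors the paper's proof: approximate $h''$ by polynomials in a weighted sense, integrate twice with integration constants chosen to enforce the constraints $p_K(L)=f(L)$ and $p_K'(L)=f'(L)$, and then correct the remaining second-derivative constraint. The paper fixes the $p_K''(L)$ constraint by shifting the approximating polynomial for $h''$ by a constant \emph{before} integrating, whereas you add $\tfrac12 d_K u^2$ \emph{after}; both are fine.

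There is, however, a genuine gap in your integration step. From $\|(r_K-\phi'')e^{-cu}\|_{L_\infty}\le\varepsilon_K$ you only get $|r_K(s)-\phi''(s)|\le\varepsilon_K e^{cs}$, and the double integral then satisfies $|\phi(u)-P_K(u)|\le(\varepsilon_K/c^2)e^{cu}$. So the growth you must absorb is \emph{exponential}, not ``polynomial-in-$u$'' as you assert. Multiplying by $e^{-cu}$ yields only the uniform bound $|(\phi-P_K)e^{-cu}|\le\varepsilon_K/c^2$, which gives the $L_\infty$ limit but nothing for finite $p$: the bound is constant in $u$, hence not integrable on $[0,\infty)$, and there is no uniform-in-$K$ dominating function (the polynomial degree of $P_K$ grows with $K$). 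Your proposed ``interpolation/dominated convergence'' upgrade therefore does not go through at this step.

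The paper's remedy is precisely what you are missing: it first approximates $h''$ with the \emph{stricter} weight $W_{L,4c}$. Each integration then costs a factor of $2$ in the exponent, via the splitting $e^{-p\cdot 4c|x-L|}=e^{-p\cdot 2c|x-L|}e^{-p\cdot 2c|x-L|}$, with one factor controlling the integrated error (using $|u-L|\ge|x-L|$ inside the inner integral) and the other making the outer integral finite. After two integrations one lands at $W_{L,c}$, as required. If you insert this slack in your initial approximation of $\phi''$ (e.g., approximate with weight $e^{-c'u}$ for some $c'<c$), the rest of your argument goes through and is essentially identical to the paper's.
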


\begin{proof}
Our proof relies critically on results on weighted approximation by polynomials over the real line. In particular, we use the following result (see, for example, Corollary 1.5 and Theorem 1.6 of \cite{Lubinsky2007Survey}):

\begin{theorem}[Weighted polynomial approximation]
\label{thm:wgtPolyApprox}
    Let $p \in [1,\infty) \cup \{\infty\}$ and let the weight function $W(x) = e^{-|x|}$. For any continuous, measurable function $h:\R \to \R$ satisfying $||h W||_{L_p(\R)} < \infty$, there exists a sequence of polynomials $(p_K)_{K=1}^\infty$ such that
    \begin{align}
       \lim_{K \to \infty} ||(h - p_K)W||_{L_p(\R)} = 0.
    \end{align}
\end{theorem}

We first note that Theorem~\ref{thm:wgtPolyApprox} can be immediately generalized to use the weight functions $W_{L,c} \equiv e^{-c|x-L|}$ and the $||\cdot||_{L_p((-\infty,L])}$ norm. The weight function can be changed by simply using Theorem~\ref{thm:wgtPolyApprox} to obtain a polynomial sequence $(\hat{p}_K)_{K=1}^\infty$ that approximates the translated and rescaled function $\hat{h}(x) \equiv h(c^{-1}x + L)$. The polynomial sequence $(p_K)_{K=1}^\infty$ defined such that $p_K(x) = \hat{p}_K(c^{-1}x + L)$ then approximates $h$. More formally, for $p \in [1,\infty)$, we have the following:

\begin{align}
    ||(h - p_K)W_{L,c}||_{L_p((-\infty,L])} 
    &= \left( \int_{-\infty}^L |(h(x) - p_K(x)) e^{-c|x-L|}|^p dx \right)^{1/p} \\
    &= \frac{1}{c^{1/p}} \left( \int_{-\infty}^0 |(h(\frac{u}{c} + L) - p_K(\frac{u}{c} + L)) e^{-|u|}|^p du \right)^{1/p} \\
    &\leq \frac{1}{c^{1/p}} \left( \int_{-\infty}^\infty |(h(\frac{u}{c} + L) - p_K(\frac{u}{c} + L)) e^{-|u|}|^p du \right)^{1/p} \\
    &= \frac{1}{c^{1/p}} ||(\hat{h} - \hat{p}_K)W||_{L_p(\R)}.
\end{align}

For the $p = \infty$ case, note that $(h-p_K)W_{L,c}$ is continuous, and so its essential supremum is simply its supremum. It follows that

\begin{align}
    ||(h - p_K)W_{L,c}||_{L_\infty((-\infty,L])} 
    &= \sup_{x \in (-\infty,L]} |(h(x) - p_K(x))W_{L,c}(x)| \\
    &= \sup_{u \in (-\infty,0]} |(h(\frac{u}{c} + L) - p_K(\frac{u}{c} + L))W_{L,c}(\frac{u}{c} + L)| \\
    &\leq \sup_{u \in (-\infty,\infty)]} |(h(\frac{u}{c} + L) - p_K(\frac{u}{c} + L))W_{L,c}(\frac{u}{c} + L)| \\
    &= ||(\hat{h} - \hat{p}_K)W||_{L_\infty(\R)}.
\end{align}

Since $(\hat{p}_K)_{K=1}^\infty$ is chosen to satisfy Theorem~\ref{thm:wgtPolyApprox} for $\hat{h}$, we have $\lim_{K \to \infty} \frac{1}{c} ||(\hat{h} - \hat{p}_K)W||_{L_p(\R)} = 0$. This gives us the following approximation result: 

\begin{align}\label{eq:wgtPolyApproxGeneral}
    \lim_{K \to \infty} ||(h - p_K)W_{L,c}||_{L_p((-\infty,L])} = 0.
\end{align}

Let $(\tilde{p}''_K)_{K=1}^\infty$ be a sequence of polynomials that approximate $h''$ in the sense of Equation~\ref{eq:wgtPolyApproxGeneral}, with the weight function $W_{L,4c}$. Because $(h'' - \tilde{p}''_K)W_{L,4c}$ is a continuous function, Equation~\ref{eq:wgtPolyApproxGeneral} with $p = \infty$ implies the following pointwise limit for all $x \in (-\infty, L]$.

\begin{align}
    \lim_{K \to \infty} (h''(x) - \tilde{p}''_K(x))W_{L,4c}(x) = 0
\end{align}

We define the polynomial sequence $(p''_K)_{K=1}^\infty$ by $p''_K \equiv \tilde{p}''_K - \tilde{p}''_K(L) + h''(L)$. Note that $p''_K(L) = h''(L)$ and $p''_K$ still approximates $h''$ in the sense of Equation~\ref{eq:wgtPolyApproxGeneral} because
\begin{align} 
    ||(h'' - p''_K)W_{L,4c}||_{L_p((-\infty,L])}
    &= ||(h'' - p''_K)W_{L,4c}||_{L_p((-\infty,L])} \\
    &= ||(h'' - \tilde{p}''_K + \tilde{p}''_K(L) - h''(L))W_{L,4c}||_{L_p((-\infty,L])} \\
    &\leq ||(h'' - \tilde{p}''_K)W_{L,4c}||_{L_p((-\infty,L])} + (\tilde{p}''_K(L) - h''(L))||W_{L,4c}||_{L_p((-\infty,L])}.
\end{align}

Clearly, 
\begin{align}
    \lim_{K \to \infty} ||(h'' - \tilde{p}''_K)W_{L,4c}||_{L_p((-\infty,L])} + (\tilde{p}''_K(L) - h''(L))||W_{L,4c}||_{L_p((-\infty,L])} = 0,
\end{align}
implying, in turn, our approximation result: 
\begin{align} \label{eq:approxRsltDeriv2}
    \lim_{K \to \infty} ||(h'' - p''_K)W_{L,4c}||_{L_p((-\infty,L])} = 0.
\end{align}

We now integrate $p''_K$ twice to obtain a polynomial sequence $(p_K)_{K=1}^\infty$ with the properties stated in Lemma~\ref{lem:finalApproxLemma}. To do this, we first define $p'_K(x) \equiv \int_L^x p''_K(u) du + h'(L)$. Note that this satisfies the constraint $p'_K(L) = h'(L)$ and has $p''_K$ as its derivative.

We can show that $p'_K$ approximates $h'$ as follows:
\begin{align}
    ||(h' - p'_K)W_{L,4c}||_{L_p((-\infty,L])}
    &= \left( \int_{-\infty}^L |(h'(x) - p'_K(x)) e^{-4c|x-L|}|^p dx \right)^{1/p} \\
    &= \left( \int_{-\infty}^L |(h'(x) - p'_K(x))|^p e^{-p4c|x-L|} dx \right)^{1/p} \\
    &= \left( \int_{-\infty}^L |\left[\int_L^x h''(v) dv + h'(L) \right] - \left[\int_L^x p''_K(u) du + h'(L) \right]|^p e^{-p4c|x-L|} dx \right)^{1/p} \\
    &= \left( \int_{-\infty}^L |\int_L^x h''(u) - p''_K(u) du|^p e^{-p4c|x-L|} dx \right)^{1/p} \\
    &= \left( \int_{-\infty}^L |\int_x^L h''(u) - p''_K(u) du|^p e^{-p4c|x-L|} dx \right)^{1/p} \\
    &\leq \left( \int_{-\infty}^L \left[\int_x^L | h''(u) - p''_K(u)| du\right]^p e^{-p4c|x-L|} dx \right)^{1/p} \\
    &\leq \left( \int_{-\infty}^L \int_x^L | h''(u) - p''_K(u)|^p du \; e^{-p4c|x-L|} dx \right)^{1/p} \text{ (Jensen's inequality)}\\
    &= \left( \int_{-\infty}^L \int_x^L | h''(u) - p''_K(u)|^p e^{-p2c|x-L|} du \; e^{-p2c|x-L|} dx \right)^{1/p}\\
    &\leq \left( \int_{-\infty}^L \int_{x}^L | h''(u) - p''_K(u)|^p e^{-p2c|u-L|} du \; e^{-p2c|x-L|} dx \right)^{1/p}\\
    &\leq \left( \int_{-\infty}^L \int_{-\infty}^L | h''(u) - p''_K(u)|^p e^{-p2c|u-L|} du \; e^{-p2c|x-L|} dx \right)^{1/p}\\
    &= ||(h'' - p''_K)W_{L,2c}||_{L_p((-\infty,L])} \left( \int_{-\infty}^L \; e^{-p2c|x-L|} dx \right)^{1/p}
\end{align}

The above uses $p \in [1,\infty)$, but similar reasoning applies to the case of $p = \infty$. This, along with Equation~\ref{eq:approxRsltDeriv2} and the finiteness of $\int_{-\infty}^L \; e^{-p2c|x-L|} dx$ gives us our approximation result for $h'$:

\begin{align}
    \lim_{K \to \infty} ||(h' - p'_K)W_{L,2c}||_{L_p((-\infty,L])} = 0.
\end{align}

Just as we obtained $p'_K$ from $p''_K$ we integrate $p'_K$ to obtain $p_K \equiv \int_L^x p'_K(u) du + h(L)$. Again, note that $p_K(L) = h(L)$ and the derivative of $p_K$ is $p'_K$. An argument like that above provides the following approximation result

\begin{align} \label{eq:approxRsltDeriv0}
    \lim_{K \to \infty} ||(h - p_K)W_{L,c}||_{L_p((-\infty,L])} = 0.
\end{align}

Finally, note that approximation with a smaller value of $c$ also implies approximation with a larger value of $c$. Formally, if $c' > c$, then $W_{L,c'}(x) \leq W_{L,c}(x)$, so, for an arbitrary function $g$,

\begin{align}
    \lim_{K \to \infty} ||gW_{L,c}||_{L_p((-\infty,L])} = 0
    \implies \lim_{K \to \infty} ||gW_{L,c'}||_{L_p((-\infty,L])} = 0.
\end{align}

Equations~\ref{eq:approxRsltDeriv2} and \ref{eq:approxRsltDeriv0}, then, prove Lemma~\ref{lem:finalApproxLemma}.

\end{proof}

We can now prove our result on the approximation properties of polynomial extensions.

\begin{theorem}[Restatement of Theorem~\ref{thm:polyApprox}]
    Let $L \in \R$ and let $f:[L,\infty) \to \R$ be twice differentiable and a tempered distribution. Let $\mu$ be a probability measure such that the integrals $\int_L^\infty f(q) e^{(L-q)/b} d\mu(q) \leq \infty$ and $\int_L^\infty e^{(L-q)/b} d\mu(q)$ exist and are finite. With $w:(-\infty,L] \to \R$, let $\tilde{f}[w]$ denote the function
    \begin{align}
    \tilde{f}[w](q) = \left\{\begin{matrix} f(q) & q \ge L\\ w(q) & q < L\end{matrix}  \right.
    \end{align}
    
    Let $h:(-\infty,L] \to \R$ be an arbitrary twice continuously differentiable tempered distribution that satisfies $h(L) = f(L)$, $h'(L) = f'(L)$ and $h''(L) = f''(L)$. Denote the estimator $g \equiv \tilde{f}[h] - b^2 \tilde{f}[h]''$ and denote its expected squared error by 

    \begin{align}
        \alpha \equiv \int_{-\infty}^\infty \int_L^\infty \left(g(x) - f(q)\right)^2 \frac{1}{2b} e^{-|x-q|/b} d\mu(q) dx.
    \end{align}

    There exists a sequence of polynomials $(p_K)_{K=1}^\infty$ over $(\infty,L]$ that satisfy $p_K(L) = f(L)$, $p_K'(L) = f'(L)$ and $p_K''(L) = f''(L)$ such that the sequence of associated estimators $g_K \equiv \tilde{f}[p_K] - b^2 \tilde{f}[p_K]''$ satisfies

    \begin{align}
        \lim_{K \to \infty} \int_{-\infty}^\infty \int_L^\infty \left(g_K(x) - f(q)\right)^2 \frac{1}{2b} e^{-|x-q|/b} d\mu(q) dx = \alpha.
    \end{align}
\end{theorem}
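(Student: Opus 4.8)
The plan is to set $\alpha_K:=\int_{-\infty}^{\infty}\int_L^\infty (g_K(x)-f(q))^2\,\frac{1}{2b}e^{-|x-q|/b}\,d\mu(q)\,dx$, show that $\alpha_K-\alpha$ involves only the region $x<L$, and then bound that difference using Lemma~\ref{lem:finalApproxLemma}. The first step is to observe that, because the matching conditions $p_K(L)=f(L)$, $p_K'(L)=f'(L)$, $p_K''(L)=f''(L)$ make $\tilde f[p_K]$ twice differentiable on all of $\R$ (so $\tilde f[p_K]''$ is a genuine function, with no Dirac mass at $L$), and likewise for $\tilde f[h]$, the estimators $g_K$ and $g$ both equal $f-b^2f''$ on $[L,\infty)$ and hence agree there. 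So the part of the double integral over $x\ge L$ contributes the same amount to $\alpha$ and to $\alpha_K$, and
\[
\alpha_K-\alpha=\int_{-\infty}^{L}\int_{L}^{\infty}\bigl[(g_K(x)-f(q))^2-(g(x)-f(q))^2\bigr]\,\frac{1}{2b}\,e^{(x-q)/b}\,d\mu(q)\,dx,
\]
where I used $|x-q|=q-x$ for $x<L\le q$. If $\alpha=+\infty$, the common ``$f(q)^2$'' contributions force $\alpha_K=+\infty$ as well and there is nothing to prove, so I may assume $\alpha<\infty$; then all the integrals below converge absolutely, since on $(-\infty,L)$ the functions $g_K=p_K-b^2p_K''$ and $g=h-b^2h''$ grow at most polynomially while the kernel decays exponentially in $x$ and $q$, and $f$ is integrable against $e^{(L-q)/b}\,d\mu(q)$ by hypothesis.

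Next I would expand the bracket using $(a-c)^2-(b-c)^2=(a-b)(a+b-2c)$, put $A_K:=g_K-g=(p_K-h)-b^2(p_K''-h'')$ on $(-\infty,L)$, and factor the kernel as $e^{(x-q)/b}=e^{(x-L)/b}e^{(L-q)/b}=W_{L,1/b}(x)\,e^{(L-q)/b}$, in the notation of Lemma~\ref{lem:finalApproxLemma}. Integrating out $q$ with the finite constants $C_1:=\frac{1}{2b}\int_L^\infty e^{(L-q)/b}\,d\mu(q)$ and $C_2:=\frac{1}{b}\int_L^\infty f(q)\,e^{(L-q)/b}\,d\mu(q)$, and writing $g_K+g=2g+A_K$, this gives
\[
\alpha_K-\alpha=2C_1\int_{-\infty}^{L}A_K\,g\,W_{L,1/b}\,dx+C_1\int_{-\infty}^{L}A_K^2\,W_{L,1/b}\,dx-C_2\int_{-\infty}^{L}A_K\,W_{L,1/b}\,dx.
\]

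Now I would apply Lemma~\ref{lem:finalApproxLemma} with $c=\frac{1}{2b}$ to obtain the polynomial sequence $(p_K)$, which automatically satisfies the three derivative constraints at $L$, so that $\|(p_K-h)W_{L,1/(2b)}\|_{L_p((-\infty,L))}\to0$ and $\|(p_K''-h'')W_{L,1/(2b)}\|_{L_p((-\infty,L))}\to0$ for $p\in\{1,2\}$; the triangle inequality then gives $\|A_KW_{L,1/(2b)}\|_{L_p((-\infty,L))}\to0$ for $p\in\{1,2\}$, and a single value of $c$ suffices because $W_{L,1/b}\le W_{L,1/(2b)}$. Using also $W_{L,1/b}=W_{L,1/(2b)}^2$, and writing all norms over $(-\infty,L)$: the third term is at most $|C_2|\,\|A_KW_{L,1/(2b)}\|_{L_1}\to0$; the second equals $C_1\,\|A_KW_{L,1/(2b)}\|_{L_2}^2\to0$; and the first is, by Cauchy--Schwarz, at most $2|C_1|\,\|A_KW_{L,1/(2b)}\|_{L_2}\,\|gW_{L,1/(2b)}\|_{L_2}\to0$, where $\|gW_{L,1/(2b)}\|_{L_2}<\infty$ because $g=h-b^2h''$ has polynomial growth on $(-\infty,L)$ (as $h$ is a tempered distribution) while $W_{L,1/(2b)}$ decays exponentially. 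Hence $\alpha_K-\alpha\to0$, which is the claim.

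The only place that needs care — the ``hard part'' — is the weight bookkeeping in this last step: one has to split $W_{L,1/b}=W_{L,1/(2b)}^2$ precisely so that squaring (second term) or applying Cauchy--Schwarz (first term) leaves exactly a $W_{L,1/(2b)}$-weighted $L_2$ norm of $A_K$, which Lemma~\ref{lem:finalApproxLemma} drives to zero, all while verifying that the residual factors $C_1$, $C_2$, and $\|gW_{L,1/(2b)}\|_{L_2}$ are finite under the stated hypotheses. Everything else — the cancellation on $x\ge L$, the algebraic identity, and the $q$-integration — is routine.
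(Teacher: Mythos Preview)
Your proposal is correct and follows essentially the same route as the paper's proof: reduce to $x<L$, expand the squared difference, factor the kernel as $e^{(x-L)/b}e^{(L-q)/b}$, integrate out $q$ into the finite constants $C_1,C_2$, and control the three resulting terms via Lemma~\ref{lem:finalApproxLemma} applied with $c=\tfrac{1}{2b}$ together with the identity $W_{L,1/b}=W_{L,1/(2b)}^2$. The only substantive difference is in bounding the cross term $\int_{-\infty}^{L}A_K\,g\,W_{L,1/b}\,dx$: you use Cauchy--Schwarz to get $\|A_KW_{L,1/(2b)}\|_{L_2}\,\|gW_{L,1/(2b)}\|_{L_2}$, whereas the paper uses H\"older with the $L_1$--$L_\infty$ pairing to get $\|gW_{L,1/(2b)}\|_{L_\infty}\,\|A_KW_{L,1/(2b)}\|_{L_1}$; both are valid since $g=h-b^2h''$ has polynomial growth and Lemma~\ref{lem:finalApproxLemma} supplies convergence in every $L_p$. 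You also briefly dispose of the case $\alpha=+\infty$, which the paper does not address explicitly; your reasoning there is sound once one observes that the $x<L$ contribution to $\alpha$ can diverge only through the common $\int_L^\infty f(q)^2 e^{(L-q)/b}\,d\mu(q)$ term, the other pieces being finite under the polynomial-growth and integrability hypotheses.
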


\begin{proof}
    
Let $h$, $f$, $\tilde{f}$, $g_K$, and $\alpha$ be as defined in Theorem~\ref{thm:polyApprox} and let $(p_K)_{K=1}^\infty$ be a sequence of polynomials satisfying the conditions of Lemma~\ref{lem:finalApproxLemma} for the weight function $W_{L,1/2b}$.

Lemma~\ref{lem:finalApproxLemma} implies that $(p_K)_{K=1}^\infty$ retains its approximation properties for weight functions with thinner tails. To see this, first note that

$$W_{L,1/b}(x) = e^{-|x-L|/b} 
= e^{-|x-L|/2b}e^{-|x-L|/2b} = W_{L,1/2b}(x)W_{L,1/2b}(x).$$ 

In turn, we have
\begin{align} 
    ||(p_K - h) W_{L,1/b}||_{L_p((-\infty,L))}
    &= ||(p_K - h) W_{L,1/2b}W_{L,1/2b}||_{L_p((-\infty,L))} \\
    &\leq ||(p_K - h) W_{L,1/2b}||_{L_p((-\infty,L))} ||W_{L,1/2b}||_{L_p((-\infty,L))}.
\end{align}

By Lemma~\ref{lem:finalApproxLemma}, the final expression limits to 0. From this and the same line of reasoning applied to the approximation error of $p_K''$ for $h''$, we have, for $p \in [1,\infty)$,

\begin{align} \label{eq:ApproxThinnerTailsH}
        \lim_{K \to \infty} ||(p_K - h) W_{L,1/2b}||_{L_p((-\infty,L))} = \lim_{K \to \infty}||(p''_K(x) -h'')W_{L,1/2b}||_{L_p((-\infty,L))} = 0.
\end{align}

We now use Lemma~\ref{lem:finalApproxLemma} to bound the approximation error of the unbiased estimator based on the polynomial extension. Using the Minkowski inequality and Lemma~\ref{lem:finalApproxLemma}, we can bound the approximation error of the polynomial-based estimator over the range where the polynomial extension is used, $(-\infty,L]$. Recall that, for $x \in (-\infty,L]$, we have $g(x) = h(x) - b^2h''(x)$ and $g_K(x) = p_K(x) - b^2 p_K(x)''$. We obtain

\begin{align}
    ||(g_K - g)W_{L,1/b}||_{L_p((-\infty,L))} &= 
    ||[(p_K - h) - b^2(p_K'' - h'')]W_{L,1/b}||_{L_p((-\infty,L))} \\
    &\leq ||(p_K - h)W_{L,1/b}||_{L_p((-\infty,L))} + ||b^2(p_K'' - h'')W_{L,1/b}||_{L_p((-\infty,L))}
\end{align}

Applying Lemma~\ref{lem:finalApproxLemma} and using a similar line of reasoning as used above to derive Equation~\ref{eq:ApproxThinnerTailsH}, the following holds for $p \in [1,\infty)$.

\begin{align} \label{eq:ApproxThinnerTailsEst}
    \lim_{K \to \infty} ||(g_K - g)W_{L,1/b}||_{L_p((-\infty,L))} = \lim_{K \to \infty} ||(g_K - g)W_{L,1/2b}||_{L_p((-\infty,L))} = 0.
\end{align}

Recall that, for $x \geq L$, $g_K(x) = g(x)$. The expected squared error of the estimator $g_K$ can be decomposed as follows:

\begin{align}
     &\int_{-\infty}^\infty \int_L^\infty \left(g_K(x) - f(q)\right)^2 \frac{1}{2b} e^{-|x-q|/b} d\mu(q) dx \\
     &= \int_{-\infty}^L \int_L^\infty \left(g_K(x) - f(q)\right)^2 \frac{1}{2b} e^{-|x-q|/b} d\mu(q) dx + \int_{L}^\infty \int_L^\infty \left(g_K(x) - f(q)\right)^2 \frac{1}{2b} e^{-|x-q|/b} d\mu(q) dx \\
     &= \int_{-\infty}^L \int_L^\infty \left(g(x) + [g_K(x) - g(x)] - f(q)\right)^2 \frac{1}{2b} e^{-|x-q|/b} d\mu(q) dx \nonumber \\
     &+ \int_{L}^\infty \int_L^\infty \left(g(x) - f(q)\right)^2 \frac{1}{2b} e^{-|x-q|/b} d\mu(q) dx \\
     &= \int_{-\infty}^L \int_L^\infty \left[\left(g(x) - f(q)\right)^2 + 2(g(x) - f(q))(g_K(x) - g(x)) + (g_K(x) - g(x))^2 \right] \frac{1}{2b} e^{-|x-q|/b} d\mu(q) dx \nonumber \\
     &+ \int_{L}^\infty \int_L^\infty \left(g(x) - f(q)\right)^2 \frac{1}{2b} e^{-|x-q|/b} d\mu(q) dx \\
     &= \int_{-\infty}^\infty \int_L^\infty \left(g(x) - f(q)\right)^2 \frac{1}{2b} e^{-|x-q|/b} d\mu(q) dx \nonumber \\
     &+\int_{-\infty}^L \int_L^\infty \left[2(g(x) - f(q))(g_K(x) - g(x)) + (g_K(x) - g(x))^2 \right] \frac{1}{2b} e^{-|x-q|/b} d\mu(q) dx \\
     &= \alpha + \int_{-\infty}^L \int_L^\infty \left[2(g(x) - f(q))(g_K(x) - g(x)) + (g_K(x) - g(x))^2 \right] \frac{1}{2b} e^{-|x-q|/b} d\mu(q) dx \\
     &= \alpha + \int_L^\infty \int_{-\infty}^L\left[2(g(x) - f(q))(g_K(x) - g(x)) + (g_K(x) - g(x))^2 \right] \frac{1}{2b} e^{(x-q)/b} dx d\mu(q) \\
     &= \alpha + \int_L^\infty \int_{-\infty}^L\left[2(g(x) - f(q))(g_K(x) - g(x)) + (g_K(x) - g(x))^2 \right] \frac{1}{2b} e^{(x-L)/b} dx \;\; e^{(L-q)/b} d\mu(q) \\
    &= \alpha - \frac{1}{b}\int_{-\infty}^L \left[g_K(x) - g(x) \right]
    e^{(x-L)/b} dx \int_L^\infty f(q) e^{(L-q)/b} d\mu(q) \nonumber \\
    &+\frac{1}{b}\int_{-\infty}^L \left[g(x)(g_K(x) - g(x)) \right]
    e^{(x-L)/b} dx \int_L^\infty e^{(L-q)/b} d\mu(q) \nonumber \\
    &+ \frac{1}{2b} \int_{-\infty}^L \left[g_K(x) - g(x) \right]^2
    e^{(x-L)/b} dx \int_L^\infty e^{(L-q)/b} d\mu(q).\label{eq:approxErrFinalDecomp}
\end{align}

Recall that, by assumption, the integrals $\int_L^\infty e^{(L-q)/b} d\mu(q)$ and $\int_L^\infty f(q) e^{(L-q)/b} d\mu(q)$ are finite. We now need only to show that the three integrals over $x$ that are multiplied by these finite terms tend towards zero as $K \to \infty$.

For the first one, 
\begin{align}\label{eq:}
    |\int_{-\infty}^L \left[g_K(x) - g(x) \right]e^{(x-L)/b} dx| 
    &\leq \int_{-\infty}^L |\left[g_K(x) - g(x) \right]e^{(x-L)/b}| dx \\
    &= ||(g_K - g)W_{L,b}||_{L_1((-\infty,L))}.
\end{align}

By Equation~\ref{eq:ApproxThinnerTailsEst} we have $\lim_{K \to \infty} ||(g_K - g)W_{L,b}||_{L_1((-\infty,L))} = 0$, which, together with the above, implies

\begin{align} \label{eq:ApproxErrDecompLim1}
     \lim_{K \to \infty} \int_{-\infty}^L \left[g_K(x) - g(x) \right]e^{(x-L)/b} dx = 0.
\end{align}

For the second integral, we apply H\"older's inequality:
\begin{align}
    |\int_{-\infty}^L \left[g(x)(g_K(x) - g(x)) \right] e^{(x-L)/b} dx|
    &\leq \int_{-\infty}^L |\left[g(x)(g_K(x) - g(x)) \right] e^{(x-L)/2b}e^{(x-L)/2b}| dx \\
    &= ||\left[g(x)(g_K(x) - g(x)) \right]W_{L,1/2b}W_{L,1/2b}||_{L_1((-\infty,L))} \\
    &\leq ||g(x)W_{L,1/2b}||_{L_\infty((-\infty,L))} \\
    &\;\;\;\;\cdot||\left[g_K(x) - g(x) \right]W_{L,1/2b}||_{L_1((-\infty,L))}
\end{align}

$||g(x)W_{L,1/2b}||_{L_\infty((-\infty,L))}$ is finite because $g$ is a tempered distribution. This, combined with Equation~\ref{eq:ApproxThinnerTailsEst} and the above implies that

\begin{align} \label{eq:ApproxErrDecompLim2}
    \lim_{K \to \infty} \int_{-\infty}^L \left[g(x)(g_K(x) - g(x)) \right] e^{(x-L)/b} dx = 0.
\end{align}

Finally, for the third integral, we have

\begin{align}
    |\int_{-\infty}^L \left[g_K(x) - g(x)\right]^2 e^{(x-L)/b} dx|
    &= \int_{-\infty}^L \left[(g_K(x) - g(x))
    \frac{1}{2b} e^{(x-L)/2b}\right]^2 dx \\
    &= ||(g_K - g)W_{L,1/2b}||_{L_2((-\infty,L))}^2.
\end{align}

Again, Equation~\ref{eq:ApproxThinnerTailsEst} states that the final term limits to zero, and so

\begin{align} \label{eq:ApproxErrDecompLim3}
    \lim_{K \to \infty} \int_{-\infty}^L \left[g_K(x) - g(x)\right]^2 e^{(x-L)/b} dx = 0.
\end{align}

Together, Equations~\ref{eq:ApproxErrDecompLim1}, \ref{eq:ApproxErrDecompLim2}, \ref{eq:ApproxErrDecompLim3}, and \ref{eq:approxErrFinalDecomp} imply that the expected squared error of $g_K$ limits to $\alpha$, proving Theorem~\ref{thm:polyApprox}.

\end{proof}

\subsection{Proof of Theorem 15}
\label{app:thm15}

\begin{theorem}
    (Restatement of Theorem~\ref{thm:optimal}) For any positive integer $k$, any real number $L \in \RR$, and any function $f$ which is twice differentiable on $[L,\infty)$, there is an algorithm that runs in time $poly(k)$ which computes the polynomial $g$ that minimizes 
    $$\int_{-\infty}^L \int_L^\infty \left(g(x)-f(q)\right)^2 \frac{1}{2b} e^{(x-q)/b} d\mu(q) dx$$
    over polynomials of degree $k$, satisfying the constraints $g(x) = h(x) - b^2 h''(x)$, $h(L) = f(L), h'(L) = f'(L)$, and $h''(L) = f''(L)$.
\end{theorem}

\begin{proof}
To construct this algorithm, we first show that our objective function is convex for any finite or bounded measure $\mu$:

The value of our objective function, in terms of the coefficient of the polynomial $a_i$, is
\begin{align}
&\int_{-\infty}^L \int_L^\infty (g(x)-f(q))^2 \frac{1}{2b} e^{(x-q)/b} d\mu(q) dx \nonumber \\ 
&=  \int_{-\infty}^L \int_L^\infty \left(\sum_{i=0}^k a^2_i x^{2i} + \sum_{i=0}^{k-1} \sum_{j=i+1}^k 2a_ia_j x^{i+j}  - \sum_{i=0}^k 2f(q) a_i x^i + f(q)^2\right) \frac{1}{2b} e^{(x-q)/b} d\mu(q)dx\\ 
&= \sum_{i=0}^k a_i^2 \int_{-\infty}^L \int_L^\infty \frac{x^{2i}}{2b} e^{(x-q)/b} \mu(q) dq dx + \sum_{i=0}^{k-1} \sum_{j=i+1}^k 2a_ia_j \int_{-\infty}^L \int_L^\infty \frac{x^{i+j}}{2b} e^{(x-q)/b} d\mu(q)dx \\ 
&\quad- \sum_{i=0}^k 2a_i \int_{-\infty}^L \int_L^\infty \frac{x^if(q)}{2b} e^{(x-q)/b} \mu(q) dq dx + \int_{-\infty}^L \int_L^\infty \frac{f(q)^2}{2b}e^{(x-q)/b} d\mu(q)dx. \nonumber
\end{align}

So the second derivative with respect to $a_i$ is constant with respect to the values of $a_i$: $$2\int_{-\infty}^L \int_L^\infty \frac{x^{2i}}{2b} e^{(x-q)/b} d\mu(q) dx$$
and the other terms in the Hessian not on the diagonal are also constant with respect to $a_i$:
$$2\int_{-\infty}^L \int_L^\infty \frac{x^{i+j}}{2b} e^{(x-q)/b} d\mu(q) dx.$$

Now, to show that the function is convex, we must show that the Hessian matrix is positive semi-definite and give conditions for the convergence of the integral. Denote the Hessian matrix by $H$ and consider some $x = (x_0,x_1, x_2, \dots, x_n)$:
\begin{align}
    x^T H x = \sum_{i=0}^k \sum_{j=0}^k x_ix_j H_{ij}.
\end{align}

Now, note that $x_i,x_j$ are constants and can be brought into the integral in the expression for $H_{ij}$. Thus, if we can show that
\begin{align}
    x_ix_j \frac{x^{i+j}}{b}e^{(x-q)/b} \mu(q) \ge 0
\end{align}
for all $x_i,x_j$, the sum of nonnegative numbers is nonnegative, so we would have $x^T H x \ge 0$ for all $x$.

Now, note that if $\mu$ is a nonnegative measure, for fixed $x,q$, we have $\frac{1}{b} e^{(x-q)/b} \mu(q) \ge 0$, so it suffices to show that $x_i x_j x^{i+j} \ge 0$. Equivalently, we just need to show that the matrix $H'$ where $H'_{ij} = x^{i+j}$ is positive semidefinite.

To see this, note that $H'_i = x^i \begin{bmatrix} x & x^2 & \dots & x^k \end{bmatrix}$. Thus, $H'$ is a rank $1$ matrix with eigenvector $\begin{bmatrix} x & x^2 & \dots & x^k \end{bmatrix}$ and eigenvalue $x^2+x^4+ \dots + x^{2k} \ge 0$. Thus, $H'$ is PSD, so $H$ is PSD as well.

Furthermore, we claim that this integral converges for all probability measures $\mu$. To see this, note that
\begin{align}
    2\int_{-\infty}^L \int_L^\infty \frac{x^{2i}}{2b} e^{(x-q)/b} \mu(q) dq dx \le \max_{q \in [L, \infty)} 2\int_{-\infty}^L \frac{x^{2i}}{2b} e^{(x-q)/b} dx = e^{-L} \cdot 2\int_{-\infty}^L \frac{x^{2i}}{2b} e^{x/b} dx,
\end{align}
and this converges, since the integrand goes to $0$ at a superpolynomial rate. Alternatively, if $\mu$ is a bounded metric, we can similarly show convergence. That is, suppose $\mu(q) \le \alpha$, 
\begin{align}
    2\int_{-\infty}^L \int_L^\infty \frac{x^{2i}}{2b} e^{(x-q)/b} \mu(q) dq dx \le \alpha \cdot \max_{q \in [L, \infty)} 2\int_{-\infty}^L \frac{x^{2i}}{2b} e^{(x-q)/b} dx = \alpha e^{-L} 2\int_{-\infty}^L \frac{x^{2i}}{2b} e^{x/b} dx,
\end{align}
and this also converges, since it only differs by a constant factor from the above. The other integrals can similarly be shown to converge for any tempered distribution $f$ as it grows at most polynomially in the tails. Thus, the objective function is convex as long as our measure is finite or bounded.

Then, the constraints are:
\begin{align}
    h(L) = f(L), h'(L) = f'(L), h''(L) = f''(L).
\end{align}

$f(L), f'(L)$, and $f''(L)$ are all constants with respect to $a_i$, and each of $h(L), h'(L),$ and $h''(L)$ are linear in $a_i$. Thus, these are linear equality constraints.

Thus, we have a convex objective function and linear equality constraints. Standard methods of optimization only require oracle access to the objective function, its gradient, and sometimes its Hessian, e.g. using one of the algorithms in chapter 10 of \cite{Boyd_Vandenberghe_2004}. Thus, this is sufficient to perform optimization. This also does not depend at all on how we integrate with respect to $q$, other than requiring convergence.
\end{proof}

\subsection{Optimization Details}\label{app:optimization}
While we showed in the main body of the paper that the optimization is possible, integrals can be slow to compute numerically. Thus, we give a more efficient method to implement the optimization. To do this, note that the objective function,

\begin{align*}
&\sum_{i=0}^k a_i^2 \int_{-\infty}^L \int_L^\infty \frac{x^{2i}}{2b} e^{(x-q)/b} d\mu(q) dx + \sum_{i=0}^{k-1} \sum_{j=i+1}^k 2a_ia_j \int_{-\infty}^L \int_L^\infty \frac{x^{i+j}}{2b} e^{(x-q)/b} d\mu(q) dx \\
&\quad- \sum_{i=0}^k 2a_i \int_{-\infty}^L \int_L^\infty \frac{x^if(q)}{2b} e^{(x-q)/b} d\mu(q) dx + \int_{-\infty}^L \int_L^\infty \frac{f(q)^2}{2b}e^{(x-q)/b} d\mu(q)dx
\end{align*}
is a quadratic in the coefficients of the polynomial $a_i$. Furthermore, since our only constraints are linear equalities, we can convert this into an unconstrained optimization problem in $k-2$ variables by isolating $a_0, a_1,$ and $a_2$ in the constraints and substituting. That is, the constraint $h(L) = f(L)$ corresponds to
\begin{align}
    \sum_{i=0}^k b_i L^i = f(L)
\end{align}
and since $h(L) - b^2 h''(L) = g(L)$, we have

\begin{align}
    a_i = b_i - b^2 i(i-1) b_{i+2}
\end{align}
and $a_k = b_k, a_{k-1} = b_{k-1}$.

We can rewrite the above as
\begin{align}
    b_i = a_i + b^2 i(i-1) b_{i+2}
\end{align}
and applying this recursively upwards on $b_{i+2}$, we get
\begin{align}
    b_i = a_i + \sum_{l=1}^{\lfloor (k-i)/2\rfloor} \frac{(i+2l)!}{i!} a_{i+2l}.
\end{align}

We can now plug this into the equality constraints to solve for $a_0, a_1,$ and $a_2$ in terms of the other coefficients. When doing so, we will get $a_0$ in terms of $a_1, a_2, \dots, a_k$ and $a_1$ in terms of $a_2, a_3, \dots, a_k$, but these can be converted to be purely in terms of $a_3, \dots, a_k$ by plugging in the expression for $a_2$ to get $a_1$ in terms of $a_3, \dots, a_k$, and then doing this again with both $a_1$ and $a_2$ to get $a_0$ in terms of $a_3, \dots, a_k$.

By doing this, we get the following equations:
\begin{align}
    f''(L) = \sum_{i=2}^k b_i L^{i-2} i(i-1) = \sum_{i=2}^k \left(a_i + \sum_{l=1}^{\lfloor (k-i)/2\rfloor} \frac{(i+2l)!}{i!} a_{i+2l}\right) L^{i-2} i(i-1)
\end{align}
\begin{align}
    a_2 = \frac{1}{2}\left(f''(L) - \sum_{i=3}^k a_i L^{i-2}i(i-1) - \sum_{i=2}^k \sum_{l=1}^{\lfloor (k-i)/2\rfloor} \frac{(i+2l)!}{i!} a_{i+2l}L^{i-2}i(i-1)\right)
\end{align}
\begin{align}
    f'(L) = \sum_{i=1}^k b_i L^{i-1} i = \sum_{i=1}^k \left(a_i + \sum_{l=1}^{\lfloor (k-i)/2\rfloor} \frac{(i+2l)!}{i!} a_{i+2l}\right) L^{i-1}i
\end{align}
\begin{align}
    a_1 = f'(L) - \sum_{i=2}^k a_i L^{i-1}i - \sum_{i=1}^k\sum_{l=1}^{\lfloor (k-i)/2\rfloor} \frac{(i+2l)!}{i!} a_{i+2l}L^{i-1}i
\end{align}
\begin{align}
    f(L) = \sum_{i=0}^k b_i L^i = \sum_{i=0}^k L^i \left( a_i + \sum_{l=1}^{\lfloor (k-i)/2\rfloor} \frac{(i+2l)!}{i!} a_{i+2l}\right)
\end{align}
\begin{align}
    a_0 = f(L) - \sum_{i=1}^k a_i L^i - \sum_{i=0}^k \sum_{l=1}^{\lfloor (k-i)/2\rfloor} \frac{(i+2l)!}{i!} a_{i+2l}L^i.
\end{align}

Now, to perform our optimization, we do the following:

\begin{enumerate}
    \item Our variables are $a_3, \dots, a_k$.
    \item We initialize the state as $0^{k-2}$, corresponding to the $2^{nd}$ degree Taylor expansion of $f$ at $L$.
    \item We compute the Hessian and Gradient of our objective function after plugging in the above values for $a_0, a_1$, and $a_2$.
    \item We perform 1 Newton step to compute the optimal values of $a_3, \dots, a_k$.
    \item We plug those values into the above formulas to compute the values of $a_0, a_1, a_2$.
\end{enumerate}

To compute the gradient and hessian, we can consider each additive term in the objective function individually. Everything in the objective function is constant with respect to $a_i$ except for terms with $a_0, a_1, a_2$, and $a_i$ in them. Thus, to compute the gradient, we compute:
\begin{align}
    \frac{\partial a_2}{\partial a_i} = -L^{i-2} i(i-1) - \sum_{j=1}^{\lfloor i/2\rfloor-1} \frac{i!}{(i-2j)!} L^{i-2j-2} (i-2j)(i-2j-1)
\end{align}
\begin{align}
    \frac{\partial a_1}{\partial a_i} = -2L\frac{\partial a_2}{\partial a_i} - L^{i-1} i - \sum_{j=1}^{\lfloor i/2-1\rfloor} \frac{i!}{(i-2j)!} L^{i-2j-1} (i-2j)
\end{align}
\begin{align}
    \frac{\partial a_0}{\partial a_i} = -(2 + L^2)\frac{\partial a_2}{\partial a_i} -L\frac{\partial a_1}{\partial a_i} - L^i - \sum_{j=1}^{\lfloor i/2\rfloor} \frac{i!}{(i-2j)!} L^{i-2j}.
\end{align}

Now, when computing the gradient, we simply compute it termwise. A term will appear in the $G_i$ iff it contains $a_i, a_2, a_1,$ or $a_0$. Furthermore, since we initialize at $a_3 = a_4 = \dots = a_k = 0$, we can safely ignore any term whose derivative contains $a_j$ for $j \ge 3$. If we let the objective function be $\varphi$ and $G$ be the gradient, we end up getting

{
\tiny
\begin{equation}\label{eq:gradient}G_i = \frac{1}{2b} \int_{-\infty}^L \int_L^\infty e^{(x-q)/b} \left\{2f(q) x^i + \sum_{\alpha=0}^2 \left[2a_\alpha x^{i+\alpha} + 2f(q)x^\alpha \frac{\partial a_\alpha}{\partial a_i} + \sum_{\beta=\alpha}^2 \left(2\frac{\partial a_\alpha}{\partial a_i} a_\beta x^{\alpha+\beta} + 2\frac{\partial a_\beta}{\partial a_i} a_\alpha x^{\alpha+\beta}\right)\right]\right\} d\mu(q) dx.
\end{equation}
}

To compute the Hessian, we can similarly compute it termwise, where a term will appear in $H_{ij}$ iff it contains any $a_{i'}a_{j'}$ for $i' \in \{i,2,1,0\}$ and $j' \in \{j,2,1,0\}$. This gets us:

{
\tiny
\begin{equation}\label{eq:hessian}H_{ij} = \frac{1}{2b}  \int_{-\infty}^L \int_L^\infty e^{(x-q)/b}\left\{ 2x^{i+j} + \sum_{\alpha=0}^2 \left[2\frac{\partial a_\alpha}{\partial a_i}x^{\alpha+j} + 2\frac{\partial a_\alpha}{\partial a_j}x^{\alpha+i} + \sum_{\beta=\alpha}^2\left( 2\frac{\partial a_\alpha}{\partial a_i}\frac{\partial a_\beta}{\partial a_j}x^{\alpha+\beta} + 2\frac{\partial a_\alpha}{\partial a_j}\frac{\partial a_\beta}{\partial a_i}x^{\alpha+\beta}\right)\right]\right\} d\mu(q) dx. \end{equation}
}

These expressions were derived as follows:

For the gradient, the nonzero terms in $G_i$ are the partial derivatives of terms with $a_i, a_{\alpha}a_i, a_\alpha,$ and $a_\alpha a_\beta$ for $0 \le \alpha,\beta \le 2$.

Furthermore, since the gradient is linear, we can simply take all of these partial derivatives separately and add them together.

Now, recall the value of the objective function:
\begin{align*}
&\sum_{i=0}^k a_i^2 \int_{-\infty}^L \int_L^\infty \frac{x^{2i}}{2b} e^{(x-q)/b} d\mu(q) dx + \sum_{i=0}^{k-1} \sum_{j=i+1}^k 2a_ia_j \int_{-\infty}^L \int_L^\infty \frac{x^{i+j}}{2b} e^{(x-q)/b} d\mu(q) dx \\
&\quad- \sum_{i=0}^k 2a_i \int_{-\infty}^L \int_L^\infty \frac{x^if(q)}{2b} e^{(x-q)/b} d\mu(q) dx + \int_{-\infty}^L \int_L^\infty \frac{f(q)^2}{2b}e^{(x-q)/b} d\mu(q)dx
\end{align*}

The term with $a_i$ adds
$$2\int_{-\infty}^L \int_L^\infty \frac{x^if(q)}{2b} e^{(x-q)/b} d\mu(q) dx$$

The terms with $a_i a_\alpha$ for $0 \le \alpha \le 2$ add
$$2 a_\alpha\int_{-\infty}^L \int_L^\infty \frac{x^{i+\alpha}}{2b} e^{(x-q)/b} d\mu(q) dx$$

The term with $a_\alpha$ for $0 \le \alpha \le 2$ adds
$$2\frac{\partial a_\alpha}{\partial a_i}\int_{-\infty}^L \int_L^\infty \frac{x^\alpha f(q)}{2b} e^{(x-q)/b} d\mu(q) dx$$

and finally, the terms with $a_\alpha a_\beta$ for $0\le \beta \le \alpha \le 2$ add

$$2 a_\alpha \frac{\partial a_\beta}{a_i}\int_{-\infty}^L \int_L^\infty \frac{x^{\beta+\alpha}}{2b} e^{(x-q)/b} d\mu(q) dx + 2 a_\beta \frac{\partial a_\alpha}{a_i}\int_{-\infty}^L \int_L^\infty \frac{x^{\beta+\alpha}}{2b} e^{(x-q)/b} d\mu(q) dx$$

Adding all these together gets us the expression for $G_i$.

We can do the same thing for the Hessian. This time, the only terms that can contribute are the ones with $a_ia_j$, $a_i a_\alpha$, $a_j a_\alpha$, and $a_\alpha a_\beta$.

The terms with $a_ia_j$ add
$$2 \int_{-\infty}^L \int_L^\infty \frac{x^{i+j}}{2b} e^{(x-q)/b} d\mu(q) dx$$

The terms with $a_i a_\alpha$ add
$$2 \frac{\partial a_\alpha}{a_j}\int_{-\infty}^L \int_L^\infty \frac{x^{i+\alpha}}{2b} e^{(x-q)/b} d\mu(q) dx$$

The terms with $a_j a_\alpha$ add
$$2 \frac{\partial a_\alpha}{a_i}\int_{-\infty}^L \int_L^\infty \frac{x^{i+\alpha}}{2b} e^{(x-q)/b} d\mu(q) dx$$

and finally, the terms with $a_\alpha a_\beta$ add

$$2 \frac{\partial a_\alpha}{a_j} \frac{\partial a_\beta}{a_i}\int_{-\infty}^L \int_L^\infty \frac{x^{\beta+\alpha}}{2b} e^{(x-q)/b} d\mu(q) dx + 2 \frac{\partial a_\beta}{a_j} \frac{\partial a_\alpha}{a_i}\int_{-\infty}^L \int_L^\infty \frac{x^{\beta+\alpha}}{2b} e^{(x-q)/b} d\mu(q) dx$$

Again, adding all of these up gets us the expression for $H_i$.

\section{Proofs for Section~\ref{sec:PRDP}}
\label{app:PRDPProofs}

\begin{theorem}(Restatement of Theorem~\ref{thm:transformPRDP})
    Assume the query value $q(D) \in [0,\infty)$; the offset parameter $a \in \R$; the noise scale parameter $b \in (0,\infty)$; the transformation function $f:[a,\infty) \to \mathcal{F} \subseteq \R$ is concave and strictly increasing; and the estimator $g:\mathcal{F} \to \mathcal{G} \subseteq \R$. Denote by $\Delta_f(r)$ the per-record sensitivity of the query $f(q(D) + a)$, as defined in Definition~\ref{def:PR-sensitivity}. The output of Algorithm~$\ref{alg:transform}\allowbreak(q(D), a, b, f, g)$ satisfies $P$-PRDP for $P(r) = \frac{\Delta_f(r)}{b}$.
\end{theorem}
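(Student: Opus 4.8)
The plan is to reduce the claim to a statement about the intermediate noisy value $\tilde v = \Lap(f(q(D)+a),b)$ computed on line~\ref{line:transform_noise}, and then verify $P$-PRDP for $\tilde v$ directly from the shape of the Laplace density. First I would invoke the post-processing property of PRDP: the estimator $g$ in Algorithm~\ref{alg:transform} is a fixed, data-independent map, so for any measurable outcome set $S$ we have $\Pr[g(\tilde v)\in S \mid D] = \Pr[\tilde v \in g^{-1}(S) \mid D]$ with $g^{-1}(S)$ measurable; hence any $P$-PRDP bound on $\tilde v$ transfers verbatim to $\tilde S = g(\tilde v)$. (If one wishes to allow a randomized $g$, the same argument goes through after conditioning on $g$'s internal randomness.) It therefore suffices to bound $\Pr[\tilde v \in S \mid D]\,/\,\Pr[\tilde v \in S \mid D']$ for neighbors $D,D'$ with $D \ominus D' = \{r\}$.

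Next I would write out the two densities. Since $q(D), q(D') \in [0,\infty)$ and $f$ has domain $[a,\infty)$, both $f(q(D)+a)$ and $f(q(D')+a)$ are well defined, and $\tilde v$ has density $t \mapsto \frac{1}{2b} e^{-|t-f(q(D)+a)|/b}$ under $D$ and $t \mapsto \frac{1}{2b} e^{-|t-f(q(D')+a)|/b}$ under $D'$. The pointwise likelihood ratio is then
\[
\frac{(1/2b)e^{-|t-f(q(D)+a)|/b}}{(1/2b)e^{-|t-f(q(D')+a)|/b}}
= \exp\!\Big(\frac{|t-f(q(D')+a)| - |t-f(q(D)+a)|}{b}\Big)
\le \exp\!\Big(\frac{|f(q(D)+a) - f(q(D')+a)|}{b}\Big),
\]
by the reverse triangle inequality $\big||x|-|y|\big|\le|x-y|$. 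By Definition~\ref{def:PR-sensitivity}, $|f(q(D)+a) - f(q(D')+a)| \le \Delta_f(r)$ for every neighboring pair differing in $r$, so the ratio is at most $e^{\Delta_f(r)/b}$. Integrating over $S$ gives $\Pr[\tilde v \in S \mid D] \le e^{\Delta_f(r)/b}\Pr[\tilde v \in S \mid D']$, and combining with the post-processing step yields $P$-PRDP with $P(r) = \Delta_f(r)/b$.

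I do not expect a serious obstacle: the argument is essentially "post-processing $+$ reverse triangle inequality $+$ the definition of per-record sensitivity." The only points needing care are (i) checking that $f$ is always evaluated inside its domain $[a,\infty)$, which is exactly why the hypothesis $q(D)\in[0,\infty)$ is imposed, and (ii) observing that concavity and strict monotonicity of $f$ are \emph{not} used in this generic bound — they enter only downstream, in Theorem~\ref{thm:privTrnsSum}, to evaluate $\Delta_f(r)$ in closed form for the sum query (monotonicity to remove the absolute value, concavity to identify the worst case as appending $r$ to the smallest admissible database). I would accordingly keep the proof of Theorem~\ref{thm:transformPRDP} short and defer all use of those structural assumptions to the sum-query result.
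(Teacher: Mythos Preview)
Your proposal is correct and follows essentially the same route as the paper: post-processing to reduce from $\tilde S$ to $\tilde v$, then the standard Laplace privacy bound via the (reverse) triangle inequality, then the definition of per-record sensitivity. The paper merely phrases the middle step in the language of order-$\infty$ R\'enyi divergence (citing the data-processing inequality and the known $d_\infty$ between two Laplace laws) rather than writing out the density ratio explicitly, and your observation that concavity and monotonicity of $f$ play no role here---only in Theorem~\ref{thm:privTrnsSum}---is accurate.
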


\paragraph{Proof of Theorem~\ref{thm:transformPRDP}}

Let $d_\alpha (X \| Y)$ denote the R\'enyi divergence of order $\alpha$ between the distributions of the random variables $X$ and $Y$. An equivalent formulation of Definition~\ref{def:PRDP} (Definition 5.1 in \cite{finleyEtAl2024}) states that a mechanism $M$ satisfies PRDP for any policy function $P$ that, for all $r$, satisfies

$$\sup_{D,D^\prime \text{ such that } D \ominus D^\prime = \{r\}} d_\infty \left(M(D) \| M(D') \right) \leq P(r).$$
    
Let $M(D)$ denote Algorithm~$\ref{alg:transform}\allowbreak(q(D), a, b, f, g)$. Note that $M(D) = g(\Lap(f(q(D) + a), b))$. We obtain $P(r)$, by using the data processing inequality \cite{renyi1961measures}, the R\'enyi divergence between two Laplace distributions (see, for example, page 8 of \cite{Mironov2017RenyiDP}), and Definition~\ref{def:PR-sensitivity} as follows:
    \begin{align}
        &\sup_{D,D^\prime \text{ such that } D \ominus D^\prime = \{r\}} d_\infty \left(M(D) \| M(D') \right) \\
        &\leq \sup_{D,D^\prime \text{ such that } D \ominus D^\prime = \{r\}} d_\infty \left(\Lap(f(q(D) + a), b) \| \Lap(f(q(D') + a), b) \right) \\
        &= \sup_{D,D^\prime \text{ such that } D \ominus D^\prime = \{r\}} \frac{|f(q(D) + a) - f(q(D') + a)|}{b} \\
        &= \frac{\Delta_f(r)}{b} \\
        &= P(r). 
    \end{align}  

\begin{theorem}
(Restatement of Theorem~\ref{thm:privTrnsSum})
Let the assumptions of Theorem~\ref{thm:transformPRDP} hold, and further assume $a \geq 0$ and $r.c \geq 0$ for all records $r$. For the sum query $q(D) = \sum_{r \in D} r.c$, the per-record sensitivity of $f(q(D) + a)$ is $\Delta_f(r) = f(r.c + a) - f(a)$ and Algorithm~\ref{alg:transform}($q(D),a,b,f,g$) satisfies PRDP with the policy function $P(r) = [f(r.c + a) - f(a)]/b$.
\end{theorem}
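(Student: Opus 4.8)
The plan is to reduce the statement to Theorem~\ref{thm:transformPRDP} by computing the per-record sensitivity $\Delta_f(r)$ of the transformed query $f(q(D)+a)$ in closed form. Since Theorem~\ref{thm:transformPRDP} already delivers $P$-PRDP with $P(r)=\Delta_f(r)/b$ for any concave, strictly increasing $f$, it suffices to prove $\Delta_f(r) = f(r.c+a) - f(a)$; the policy function then drops out immediately.

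First I would unpack Definition~\ref{def:PR-sensitivity}. Fix a record $r$ and a pair of neighbors with $D \ominus D' = \{r\}$; exactly one of the two databases contains $r$, so without loss of generality $D' = D \cup \{r\}$ with $r \notin D$, giving $q(D') = q(D) + r.c$. Because every attribute is nonnegative, $q(D) \ge 0$, and since $a \ge 0$ as well, the base point $t := q(D)+a$ ranges over $[a,\infty)$. As $f$ is increasing and $r.c \ge 0$,
\[
|f(q(D)+a) - f(q(D')+a)| = f(t+r.c) - f(t),
\]
so that $\Delta_f(r) = \sup_{t \ge a}\bigl[f(t+r.c) - f(t)\bigr]$.

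Next I would invoke concavity of $f$ to show the map $t \mapsto f(t+r.c) - f(t)$ is non-increasing on $[a,\infty)$ (the "diminishing increments" property: for $t_1 \le t_2$, order the four points $t_1, t_2, t_1+r.c, t_2+r.c$ appropriately and apply the defining concavity inequality to the pair summing to the common value to obtain $f(t_1+r.c)-f(t_1) \ge f(t_2+r.c)-f(t_2)$). Hence the supremum is attained at $t = a$, which is genuinely realized by the neighboring pair $D = \emptyset$, $D' = \{r\}$ (here $q(D)=0$, so $t=a$). Therefore $\Delta_f(r) = f(a+r.c) - f(a)$, and substituting into Theorem~\ref{thm:transformPRDP} yields $P$-PRDP with $P(r) = [f(r.c+a) - f(a)]/b$.

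The only mildly delicate point — and the one I would be most careful about — is the monotonicity-in-$t$ argument: it genuinely requires concavity of $f$ (not just monotonicity), a small case split according to whether $r.c$ is at most or exceeds $t_2 - t_1$ when invoking the concavity inequality, and the observation that $t=a$ is actually achievable (via the empty database) so that the supremum is a maximum rather than merely a limiting value. Everything else is routine bookkeeping layered on top of the already-proved Theorem~\ref{thm:transformPRDP}.
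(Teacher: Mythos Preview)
Your proposal is correct and follows essentially the same route as the paper: reduce to Theorem~\ref{thm:transformPRDP}, take $D' = D \cup \{r\}$ without loss of generality, and use concavity plus monotonicity of $f$ to conclude the supremand $f(t+r.c)-f(t)$ is non-increasing in $t=q(D)+a\ge a$, so $\Delta_f(r)=f(r.c+a)-f(a)$. You are simply more explicit than the paper about the diminishing-increments step and about the supremum being attained at $D=\emptyset$; the paper compresses this to the single sentence ``$f$ is concave and increasing, so the final supremand is decreasing in $\sum_{s\in D} s.c$, which is bounded below by $0$.''
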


\paragraph{Proof of Theorem~\ref{thm:privTrnsSum}}

By the symmetry of the neighbor relationship, we assume WLOG that $D' = D \cup \{r\}$.
    \begin{align}
        \Delta_f(r) &\equiv 
        \sup_{D,D^\prime \text{ such that } D \ominus D^\prime = \{r\}} |f(a + \sum_{s \in D} s.c) - f(a + \sum_{s \in D'} s.c)| \\
        &= \sup_{D; r} |f(a + \sum_{s \in D} s.c) - f(a + r.c + \sum_{s \in D} s.c) |.
    \end{align}

    $f$ is concave and increasing, so the final supremand is decreasing in $\sum_{s \in D} s.c$, which is bounded below by 0. Therefore,
    $$\Delta_f(r) = f(r.c + a) - f(a).$$ 

    It follows from Theorem~\ref{thm:transformPRDP}, then, that this mechanism has a policy function of $P(r) = [f(r.c + a) - f(a)]/b$.

\bibliographystyle{plain}
\bibliography{ref}

\end{document}